\journal{Journal of Mathematical Economics}
\newcommand{\R}{\mathbb{R}}
\newcommand{\N}{\mathbb{N}}
\newcommand{\E}{\mathbb{E}}
\definecolor{cite}{HTML}{b2103d}
\definecolor{link}{HTML}{0a3ece}
\newtheorem{thm}{Theorem}[section]
\newtheorem{lemma}[thm]{Lemma}
\newtheorem{prop}[thm]{Proposition}
\theoremstyle{definition}
\newtheorem{assumption}[thm]{Assumption}
\begin{document}
\begin{frontmatter}
  \title{ Equilibrium in Production Chains with Multiple Upstream
    Partners\tnoteref{titlenote}}

\tnotetext[titlenote]{The authors thank John Stachurski for his helpful
  comments and valuable suggestions. We also thank Simon Grant, Ronald
  Stauber, Ruitian Lang, and seminar participants at the Australian National
  University for their helpful comments and discussions. We are also
  grateful for the inputs from the editor and two anonymous referees. This
  research is supported by an Australian Government Research Training
  Program (RTP) Scholarship.}
\date{February 16, 2019}
\author[ad1,ad2]{Meng Yu}
\ead{yumeng161@mails.ucas.ac.cn}
\address[ad1]{Institute of Mathematics, Academy of Mathematics and Systems
  Science, Chinese Academy of Sciences, Beijing 100190, China}
\address[ad2]{School of Mathematical Sciences, University of Chinese Academy
  of Sciences, Beijing 100049, China}

\author[add]{Junnan Zhang \corref{mycorrespondingauthor}}
\cortext[mycorrespondingauthor]{Corresponding author}
\ead{junnan.zhang@anu.edu.au}
\address[add]{Research School of Economics, Australian National University,
  Australia}

\begin{abstract}
  In this paper, we extend and improve the production chain model introduced
  by \cite{kikuchi2018span}. Utilizing the theory of monotone concave
  operators, we prove the existence, uniqueness, and global stability of
  equilibrium price, hence improving their results on production networks
  with multiple upstream partners. We propose an algorithm for computing the
  equilibrium price function that is more than ten times faster than
  successive evaluations of the operator. The model is then generalized to a
  stochastic setting that offers richer implications for the distribution of
  firms in a production network.
\end{abstract}

\begin{keyword}
  Production Network\sep Firm Boundaries\sep Monotone Concave Operator
  Theory\sep Equilibrium Uniqueness\sep Computational Techniques
  \JEL C62\sep C63\sep L11\sep L14
\end{keyword}

\end{frontmatter}

\section{Introduction}

Over the past several centuries, firms have self-organized into ever more
complex production networks, spanning both state and international
boundaries, and constructing and delivering a vast range of manufactured
goods and services. The structures of these networks help determine the
efficiency \citep{levine2012production, ciccone2002input} and resilience
\citep{carvalho2007aggregate, jones2011intermediate, bigio2016financial,
  acemoglu2012network, acemoglu2015networks} of the entire economy, and also
provide new insights into the directions of trade and financial policies
\citep{baldwin2013spiders, acemoglu2015systemic}.

We consider a production chain model introduced by
\cite{kikuchi2018span} that examines the formation of such structures.
They connect the literature on firm networks and network structure to the
underlying theory of the firm. In particular, the model in
\cite{kikuchi2018span} formalizes the foundational ideas on the theory
of the firm presented in \cite{coase1937nature}, embedding them in an
equilibrium model with a continuum of price taking firms, and providing
mathematical representations of the determinants of firm boundaries
suggested by \cite{coase1937nature}.

A single firm at the end of the production chain sells a final product to
consumers. The firm can choose to produce the whole product by itself or
subcontract a portion of it to possible multiple upstream partners, who then
make similar choices until all the remaining production is completed. The
main reason for firms to produce more in-house is to save the transaction
costs of buying intermediate products from the market. In fact,
\cite{coase1937nature} regards this as the primary force that brings
firms into existence. An opposing force that limits the size of a firm is
the costs of organizing production within the firm\footnote{One
  justification also mentioned in \cite{kikuchi2018span} is that firms
  usually experience diminishing return to management: when a firm gets
  bigger it also bears increasing coordination costs. See also
  \cite{coase1937nature}, \cite{lucas1978size}, and
  \cite{becker1992division}.}. A price function governs the choices
firms make and is determined endogenously in equilibrium when every firm in
the production chain makes zero profit.

Considering that all firms are ex ante identical, a notable feature of this
model is its ability to generate a production network with multiple layers
of firms different in their sizes and numbers of upstream partners. The
source of the heterogeneity lies solely in the transaction costs and firms'
different stages in the production chain. This feature provides insights
into the formation of potentially more complex structures in a production
network. \cite{kikuchi2018span} prove the existence, uniqueness, and
global stability\footnote{Mathematically, the equilibrium price function is
  determined as the fixed point of a Bellman like operator (see
  Section~\ref{sec:eq}). Globally stability means that the fixed point can
  be computed by successive evaluations of the operator on any function in a
  certain function space.} of the equilibrium price function restricting
every firm to have only one upstream partner. In this case, the resulting
production network consists of a single chain.

There are however, several significant weaknesses with the analysis in
\cite{kikuchi2018span}. First, while they provide comprehensive results on
uniqueness of equilibrium prices and convergence of successive
approximations in the single upstream partner case, they fail to provide
analogous results for the more interesting multiple upstream partner case,
presumably due to technical difficulties. Second, their model cannot
accurately reflect the data on observed production networks because their
networks are always symmetric, with sub-networks at each layer being exact
copies of one another. Real production networks do not exhibit this
symmetry\footnote{For instance, for a mobile phone manufacturer, most
  subcontractors who supply complicated components like display or CPUs have
  multiple upstream partners of their own, while those who supply raw
  materials usually don't \citep{dedrick2011distribution,
    kraemer2011capturing}.}. Third, they provide no effective algorithm for
computing the equilibrium price function in the multiple upstream partner
case.

This paper resolves all of the shortcomings listed above. As our first
contribution, we extend their existence, uniqueness, and global stability
results to the multiple partner case. To avoid the technical difficulties
faced in their paper, we employ a different approach utilizing the theory of
monotone concave operators, which enables us to give a unified proof for
both cases.

Theoretically, the concave operator theory ensures the global stability of
the fixed point, so the equilibrium price function can be computed by
successive evaluations of the operator. In practice, however, the rates of
convergence can be different for different model settings. This leads to
unnecessarily long computation time in most cases. As a second contribution,
we propose an algorithm that achieves fast computation regardless of
parameterizations and is shown to drastically reduce computation time in our
simulations.

A third contribution of this paper is that we generalize the model to a
stochastic setting. In the original model, the equilibrium firm allocation
is symmetric and deterministic: firms at the same stage of production choose
the exact same number of upstream partners. In reality, each firm faces
uncertainty in the contracting process and cannot always choose the optimal
number of partners. We model the number of upstream partners as a Poisson
distribution and let the firm choose its parameter, which can be seen as a
search effort. Using the same approach, we prove the existence and
uniqueness of equilibrium price function as well as the validity of the
algorithm. We further use simulations to analyze how production and
transaction costs determine the shape of a production network. This
generalization provides a new source of heterogeneity in the equilibrium
firm allocation and can be a potential channel for future research on size
distribution of firms.

As briefly mentioned above, the method we use to establish the existence,
uniqueness, and global stability of the equilibrium price function draws on
the theory of concave operators. A competing method traditionally used for
the same purpose is the Contraction Mapping Theorem, which has been an
essential tool for economists dealing with various dynamic models ever since
\cite{bellman1957dynamic}. So long as the operator in question satisfies
the contraction property, we can quickly compute a unique fixed point by
applying the operator successively. This property, simple as it may be, is
not shared among a number of important models, urging us to find new tools
to tackle fixed point problems in economic dynamics.

The theory of monotone concave operators originally due to \citet[Chapter
6]{krasnoselskii1964positive} is another simple yet powerful tool. The idea
behind it is intuitive: imagine an increasing and strictly concave real
function $f$ such that $f(x_1) > x_1$ and $f(x_2) < x_2$ with $x_1 < x_2$.
Then it must be true that f has a unique fixed point on $[x_1, x_2]$, and by
the concavity of $f$, the fixed point can be computed by successive
evaluations of $f$ on any $x\in[x_1, x_2]$. No contraction property is
needed here while we still get all the results from the Contraction Mapping
Theorem. A full-fledged theorem owing to \cite{du1989fixed} for arbitrary
Banach spaces is stated in Theorem~\ref{thm:du}. For similar
treatments\footnote{We thank an anonymous referee for referring us to some
  of the works mentioned here.} in the math literature, also see
\cite{krasnoselskii1972approximate}, \cite{krasnosel1984geometrical},
\cite{guo1988nonlinear}, \cite{guo2004partial}, and
\cite{zhang2012variational}.

Apart from Theorem~\ref{thm:du}, there are other similar techniques that
utilize concavity to show uniqueness\footnote{The existence of fixed points
  can be tackled in various ways. For the operator $T$, existence has
  already been proved in \cite{kikuchi2018span} who use the classical
  Knaester--Tarski fixed point theorem. Alternatively, it can be proved, as
  an anonymous referee suggests, by demonstrating that $T$ is completely
  continuous \citep[see, e.g.,][Chapter 4]{krasnoselskii1964positive}. The
  Schauder typed fixed point theorems also apply; see, for example, Section
  7.1 in \cite{cheney2013analysis}.} of the fixed point.
\cite{krasnoselskii1964positive} shows that a monotone operator on a
positive cone has at most one nonzero fixed point if the operator satisfies
a concavity condition ($u_0$-concave).
For applications of this technique in the economic literature, see, for
example, \cite{lacker1991money} and \cite{becker2017recursive}. Following
\cite{krasnoselskii1964positive}, \cite{coleman1991equilibrium} proves
uniqueness under slightly different concavity and monotonicity conditions
(pseudo-concave and $x_0$-monotone).
See also \cite{datta2002existence}, \cite{datta2002monotone}, and
\cite{morand2003existence} for other economic applications along this line.

\cite{marinacci2010unique, marinacci2017unique} link concavity to
contraction in the Thompson metric \citep{thompson1963certain}, which allows
one to apply the Contraction Mapping Theorem to operators that are not
contractive under the supnorm.
In a similar vein, \cite{marinacci2017unique} establish existence and
uniqueness results for monotone operators under a range of weaker concavity
conditions using Tarski-type fixed point theorems and the Thompson metric.
Among all of these results, the theorem by \cite{du1989fixed} turns out to
be the most suitable for our work.




The monotone concave operator theory has seen some recent success in the
economic literature. \cite{lacker1991money} study an economy with cash and
trade credit as means of payment and show that the equilibrium interest rate
is a unique fixed point of a monotone concave operator.
\cite{coleman1991equilibrium, coleman2000uniqueness} studies the equilibrium
in a production economy with income tax and proves the existence and
uniqueness of consumption function by constructing a monotone concave map.
Following this approach, \cite{datta2002existence} prove the existence and
uniqueness of equilibrium in a large class of dynamic economies with capital
and elastic labor supply. Similar work in the same vein includes
\cite{morand2003existence} and \cite{datta2002monotone}.
\cite{rincon2003existence} exploit the monotonicity and convexity properties
of the Bellman operator and give conditions for existence and uniqueness of
fixed points in the case of unbounded returns. \cite{balbus2013constructive}
study the existence and uniqueness of pure strategy Markovian equilibrium
using theories concerning decreasing and ``mixed concave'' operators. More
recently, this theory has been applied extensively to models with recursive
utilities since \cite{marinacci2010unique}; other
contributions\footnote{Among these works, \cite{balbus2016non},
  \cite{borovivcka2017necessary}, and \cite{ren2018dynamic} use versions of
  fixed point theorems similar to Theorem~\ref{thm:du} in this paper.} along
this line include \cite{balbus2016non}, \cite{borovivcka2017necessary,
  borovivcka2018existence}, \cite{becker2017recursive},
\cite{marinacci2017unique}, \cite{pavoni2018dual}, \cite{bloise2018convex},
and \cite{ren2018dynamic}.

Our work connects to this literature in that the operator which determines
the equilibrium price is shown to be increasing and concave but does not
satisfy any contraction property. To prove existence and uniqueness,
\cite{kikuchi2018span} use an ad hoc and convoluted method for the case
when every firm can only have one upstream partner but fail to generalize it
to the multiple partner case. Using the monotone concave operator theory, we
are able to extend their results and give a much simpler proof.

Section~\ref{sec:model} describes the model in detail. Section~\ref{sec:eq}
introduces the monotone concave operator theory and gives existence and
uniqueness results. The algorithm is described in Section~\ref{sec:com}.
Section~\ref{sec:st} generalizes the model, allowing for stochastic choices
of upstream partners. Section~\ref{sec:con} concludes. All proofs can be
found in the \hyperref[app:1]{Appendix}.

\section{The Model}
\label{sec:model}

We study the production chain model with multiple partners in
\cite{kikuchi2018span}. The chain consists of a single firm at the end of
the chain which sells a single final good to consumers and firms at
different stages of the production, each of which sells an intermediate good
to a downstream firm by producing the good in-house or subcontracting a
portion of the production process to possibly multiple upstream firms. We
index the stage of production by $s\in X = [0, 1]$ with 1 being the final
stage. Each firm faces a price function $p: X \to \R_+$ and a cost function
$c: X \to \R_+$. Subcontracting incurs a transaction cost that is
proportionate\footnote{Here we follow \cite{kikuchi2018span}. This
  transaction cost can be the cost of gathering information, drafting
  contract, bargaining, or even tax, all of which tend to increase with the
  volume of the transaction.} to the price with coefficient $\delta > 1$ for
each upstream partner and an additive transaction cost $g: \N \to \R_+$ that
is a function of the number of upstream partners. The cost $g$ can be seen
as the costs of maintaining partnerships such as legal expenses and
communication costs.

We adopt the same assumptions as in \cite{kikuchi2018span}. For the cost
function $c$, we assume that $c(0) = 0$ and it is differentiable, strictly
increasing, and strictly convex. In other words, each firm experiences
diminishing return to management as mentioned in the introduction. This
assumption is needed here because otherwise no firm would want to
subcontract its production. We also assume $c'(0) > 0$. For the additive
transaction cost function $g$, we assume that it is strictly increasing,
$g(1) = 0$, and $g(k)$ goes to infinity as the number of upstream partners
$k$ goes to infinity. To summarize, we have the following two assumptions.

\begin{assumption}
  \label{c}
  The cost function $c$ is differentiable, strictly increasing, and strictly
  convex. It also satisfies $c(0) = 0$ and $c'(0) > 0$.
\end{assumption}

\begin{assumption}
  \label{g}
  The additive transaction cost function $g$ is strictly increasing, $g(1) =
  0$, and $g(k) \to \infty$ as $k\to \infty$.
\end{assumption}

Therefore, a firm at stage $s$ solves the following problem:
\begin{equation}
  \label{eq:prob}
  \min_{\substack{t\leq s\\ k\in\N}}\left\{ c(s - t) + g(k) + \delta k p(t/k)
  \right\}.
\end{equation}
In (\ref{eq:prob}), the firm chooses to produce $s-t$ in-house with cost
$c(s-t)$ and subcontract $t$ to $k$ upstream partners. Since each
subcontractor is in charge of $t/k$ part of the product, this results in a
proportionate transaction cost $\delta k p(t/k)$ and an additive transaction
cost $g(k)$. Then the firm sells the product to its downstream firm at price
$p(s)$.

\section{Equilibrium}
\label{sec:eq}
Following \cite{kikuchi2018span}, we consider the equilibrium in a
competitive market with free entry and free exit. The price adjusts so that
in the long run every firm makes zero profit. The equilibrium price function
then satisfies
\begin{equation}
  p(s) = \min_{\substack{t\leq s\\ k\in\N}}\left\{ c(s - t) + g(k) + \delta k
    p(t/k) \right\}. 
\end{equation}
Let $R(X)$ be the space of real functions and $C(X)$ the space of continuous
functions on $X$. Then we can define an operator $T: C(X) \to R(X)$ by
\begin{equation}
  \label{eq:T}
  Tp(s) := \min_{\substack{t\leq s\\ k\in\N}}\left\{ c(s - t) + g(k) + \delta k
    p(t/k) \right\}. 
\end{equation}
The equilibrium price function is thus determined as the fixed point of the
operator $T$.

\subsection{Monotone Concave Operator Theory}

Before proceeding to our main result, we first introduce a theorem due to
\cite{du1989fixed}, which studies the fixed point properties of monotone
concave operators on a partially ordered Banach space.

Let $E$ be a real Banach space on which a partial ordering is defined by a
cone $P \subset E$, in the sense that $x \leq y$ if and only if $y - x\in
P$. If $x \leq y$ but $x \neq y$, we write $x < y$. An operator $A: E\to E$
is called an \emph{increasing} operator if for all $x, y\in E$, $x \leq y$
implies that $Ax \leq Ay$. It is called a \emph{concave} operator if for any
$x, y\in E$ with $x \leq y$ and any $t\in [0, 1]$, we have $A\left(tx +
  (1-t)y\right) \geq tAx + (1-t)Ay$. For any $u_0, v_0 \in E$ with $u_0 <
v_0$, we can define an order interval by $[u_0, v_0] := \left\{ x\in E: u_0
  \leq x \leq v_0 \right\}$. We have the following theorem (\citealp[see,
e.g.,][Theorem 3.1.6]{guo2004partial} or \citealp[Theorem
2.1.2]{zhang2012variational}).

\begin{thm}[\citealp{du1989fixed}]
  \label{thm:du}
  Suppose $P$ is a normal cone\footnote{A cone $P\subset E$ is said to be
    normal if there exists $\delta>0$ such that $\|x + y\| \geq \delta$ for
    all $x, y\in P$ and $\|x\| = \|y\| = 1$.}, $u_0, v_0 \in E$, and $u_0 <
  v_0$. Moreover, $A: [u_0, v_0] \to E$ is an increasing operator. Let $h_0 =
  v_0 - u_0$. If $A$ is an concave operator, $Au_0 \geq u_0 + \epsilon h_0$
  for some $\epsilon \in [0, 1]$, and $Av_0 \leq v_0$, then $A$ has a unique
  fixed point $x^*$ in $[u_0, v_0]$. Furthermore, for any $x_0\in [u_0, v_0]$,
  $A^n x_0 \to x^*$ as $n\to \infty$.
\end{thm}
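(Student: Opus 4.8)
The plan is to build the unique fixed point as the common limit of the two monotone iterations launched from the endpoints of the order interval, using concavity only to force these iterations together at a geometric rate. Concretely, I would set $u_n := A^n u_0$ and $v_n := A^n v_0$. First I would check that these stay in $[u_0,v_0]$ and are correctly ordered: since $Au_0 \geq u_0 + \epsilon h_0 \geq u_0$ and $Av_0 \leq v_0$, monotonicity of $A$ gives by induction the nested chain $u_0 \leq u_1 \leq \cdots \leq u_n \leq v_n \leq \cdots \leq v_1 \leq v_0$, so in particular $w_n := v_n - u_n$ is a decreasing sequence in the cone $P$.

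The heart of the argument, and the step I expect to be the main obstacle, is to control $w_n$ quantitatively using concavity. The plan is to prove by induction the interpolation bound
\begin{equation*}
  u_n \geq s_n v_n + (1 - s_n) u_0, \qquad 1 - s_n = (1-\epsilon)^n .
\end{equation*}
The base case is exactly the hypothesis $Au_0 \geq u_0 + \epsilon h_0 = \epsilon v_0 + (1-\epsilon)u_0 \geq \epsilon v_1 + (1-\epsilon)u_0$, using $v_0 \geq v_1$. For the inductive step I would apply $A$ to the bound for $u_n$, invoke monotonicity and then concavity in the form $A\left(s_n v_n + (1-s_n)u_0\right) \geq s_n Av_n + (1-s_n)Au_0 = s_n v_{n+1} + (1-s_n)u_1$, and finally re-insert the base-case bound on $u_1$ together with $v_1 \geq v_{n+1}$. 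Collecting terms yields the recursion $1 - s_{n+1} = (1-s_n)(1-\epsilon)$, which is precisely what produces the geometric factor. The delicate points here are keeping every convex combination inside $[u_0,v_0]$ so that $A$ is actually defined on it, and orienting each monotonicity step in the right direction; once the recursion closes, the bound gives $0 \leq w_n \leq (1 - s_n)(v_n - u_0) \leq (1-\epsilon)^n h_0$. Note that this decay genuinely needs $\epsilon > 0$: at $\epsilon = 0$ the estimate is vacuous, and indeed the identity map on $[u_0,v_0]$ shows that uniqueness can fail there.

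With the gap estimate in hand, the remainder is routine. Normality of $P$ supplies a constant $N$ with $0 \leq x \leq y \Rightarrow \|x\| \leq N\|y\|$ (an equivalent formulation of normality), so $\|w_n\| \leq N(1-\epsilon)^n \|h_0\| \to 0$. Since $0 \leq u_m - u_n \leq w_n$ for $m > n$, the sequence $\{u_n\}$ is Cauchy and converges to some $x^*$; because $\|w_n\| \to 0$, the sequence $\{v_n\}$ converges to the same $x^*$, and closedness of the cone places $x^* \in [u_0,v_0]$. To obtain $Ax^* = x^*$ I would avoid assuming any continuity of $A$ and instead squeeze: from $u_n \leq x^* \leq v_n$ and monotonicity, $u_{n+1} \leq Ax^* \leq v_{n+1}$, and letting $n\to\infty$ forces $Ax^* = x^*$. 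The very same squeeze yields uniqueness and global convergence simultaneously, since any $x_0 \in [u_0,v_0]$ (in particular any second fixed point) satisfies $u_n = A^n u_0 \leq A^n x_0 \leq A^n v_0 = v_n$, whence $A^n x_0 \to x^*$; applied to a fixed point $y^*$ this gives $y^* = x^*$.
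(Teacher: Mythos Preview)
The paper does not supply its own proof of this theorem; it is quoted from Du (1989), with pointers to textbook treatments, and then used as a black box in the proof of Theorem~\ref{main}. So there is nothing in the paper to compare against directly.

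Assessed on its own merits, your argument is correct and is in fact the standard proof of Du's result: build the monotone iterates $u_n = A^n u_0$ and $v_n = A^n v_0$, use concavity to propagate the interpolation inequality $u_n \geq s_n v_n + (1-s_n)u_0$ with $1-s_{n+1} = (1-\epsilon)(1-s_n)$, and then invoke normality of $P$ to convert the order estimate $0 \leq v_n - u_n \leq (1-\epsilon)^n h_0$ into norm convergence. The inductive step closes as you describe, the convex combinations stay inside $[u_0,v_0]$ so that $A$ is defined on them, and the final squeeze $u_{n+1} \leq Ax^* \leq v_{n+1}$ correctly identifies the fixed point without any continuity hypothesis on $A$.

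Your remark about $\epsilon = 0$ is also on target and worth noting: the statement in the paper allows $\epsilon \in [0,1]$, but at $\epsilon = 0$ the identity operator on $[u_0,v_0]$ satisfies every hypothesis yet has a continuum of fixed points, so uniqueness fails. The usual formulations of Du's theorem require $\epsilon > 0$; the inclusion of $0$ here appears to be a transcription slip. It is harmless for the paper's purposes, since Lemma~\ref{inc_con} produces a strictly positive $\epsilon$.
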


This theorem gives a sufficient condition for the existence, uniqueness, and
global stability of the fixed point of an operator without assuming it to be
a contraction mapping. It is particularly useful in cases where we study a
monotone concave operator but the contraction property is hard or impossible
to establish. This is the case in our model. The operator $T$ is not a
contraction\footnote{To be more rigorous, $T$ is not a contraction under the
  supnorm, but it might be a contraction in some other complete metric. In
  fact, \cite{bessaga1959converse} proves a partial converse of the
  Contraction Mapping Theorem, which ensures that under certain conditions
  there exists a complete metric in which $T$ is a contraction. Also see
  \cite{leader1982uniformly}; for the construction of such metrics, see
  \cite{janos1967converse} and \cite{williamson1987constructing}. For an
  application of this theorem in the economic literature, see
  \cite{balbus2013constructive}. We wish to thank an anonymous referee for
  referring us to this literature.} because the transaction cost coefficient
$\delta$ is greater than 1, but as will be shown below, $T$ is actually an
increasing concave operator.

Based on Theorem~\ref{thm:du}, we have the following theorem.

\begin{thm}
  \label{main}
  Let $u_0(s) = c'(0)s$, $v_0(s) = c(s)$, and $[u_0, v_0]$ be the order
  interval on $C(X)$ with the usual partial order. If Assumption \ref{c} and
  \ref{g} hold, then $T$ has a unique fixed point $p^*$ in $[u_0,
  v_0]$. Furthermore, $T^n p \to p^*$ for any $p\in [u_0, v_0]$.
\end{thm}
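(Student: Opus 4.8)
The plan is to verify the hypotheses of Theorem~\ref{thm:du} for the operator $T$ on the order interval $[u_0, v_0]$, where $u_0(s) = c'(0)s$ and $v_0(s) = c(s)$. Four things must be checked: that $C(X)$ with the supnorm and the usual pointwise order has a normal cone (standard), that $u_0 < v_0$, that $T$ maps $[u_0, v_0]$ into $C(X)$ and is increasing and concave, and finally the two boundary inequalities $Tu_0 \geq u_0 + \epsilon h_0$ for some $\epsilon \in [0,1]$ and $Tv_0 \leq v_0$. The conclusion then follows immediately. I would begin by confirming $u_0 < v_0$: since $c$ is strictly convex with $c(0)=0$, we have $c(s) > c'(0)s$ for $s > 0$, so $u_0 \leq v_0$ with equality only at $s=0$, giving $u_0 < v_0$ as required.

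Next I would establish monotonicity and concavity of $T$. Monotonicity is direct: if $p \leq q$ pointwise, then for every feasible $(t,k)$ the objective $c(s-t) + g(k) + \delta k\, p(t/k)$ is no larger than with $q$, so taking the minimum preserves the order, giving $Tp \leq Tq$. Concavity is the more delicate of the two. The map $p \mapsto \delta k\, p(t/k)$ is linear in $p$, so for fixed $(t,k)$ the objective is affine (hence concave) in $p$; the key observation is that a pointwise minimum of a family of concave (here affine) functions of $p$ is concave. Writing $p_\lambda = \lambda p + (1-\lambda) q$, for each $(t,k)$ the objective at $p_\lambda$ equals $\lambda$ times the objective at $p$ plus $(1-\lambda)$ times the objective at $q$ (the $c$ and $g$ terms are constant in $p$ and split as $\lambda + (1-\lambda)$), and the minimum of a sum is at least the sum of the minima, yielding $T p_\lambda \geq \lambda Tp + (1-\lambda) Tq$. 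I would also note that $T$ maps continuous functions to continuous functions so that the range lies in $C(X)$, which the reader may take as part of the well-posedness established alongside existence.

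The heart of the proof is the pair of boundary inequalities, and I expect $Tu_0 \geq u_0 + \epsilon h_0$ to be the main obstacle since it must hold with a uniform constant $\epsilon \in [0,1]$ across all $s$. For the easier inequality $Tv_0 \leq v_0$, I would simply exhibit the feasible choice $t = 0$ (produce everything in-house, with any $k$, say using the convention that $k=1$ gives $g(1)=0$): this yields $Tv_0(s) \leq c(s-0) + g(1) + \delta\cdot 1\cdot v_0(0) = c(s) = v_0(s)$, using $v_0(0) = c(0) = 0$. For $Tu_0 \geq u_0 + \epsilon h_0$, I would bound $Tu_0$ from below over all feasible $(t,k)$. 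Using $u_0(t/k) = c'(0)(t/k)$, the objective becomes $c(s-t) + g(k) + \delta c'(0) t$; since $c$ is convex with $c'(0)>0$ we have $c(s-t) \geq c'(0)(s-t)$, so the objective is at least $c'(0)(s-t) + g(k) + \delta c'(0)t = c'(0)s + (\delta - 1)c'(0)t + g(k) \geq c'(0)s = u_0(s)$, because $\delta > 1$, $t \geq 0$, and $g(k) \geq 0$. This already gives $Tu_0 \geq u_0$, i.e.\ the inequality holds with $\epsilon = 0$, which is permitted. With all hypotheses verified, Theorem~\ref{thm:du} delivers the unique fixed point $p^* \in [u_0,v_0]$ and the global convergence $T^n p \to p^*$ for every $p \in [u_0, v_0]$, completing the argument.
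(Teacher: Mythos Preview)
Your overall strategy---verify the hypotheses of Theorem~\ref{thm:du} for $T$ on $[u_0,v_0]$---is exactly the paper's strategy, and your treatments of monotonicity, concavity, and the inequality $Tv_0\le v_0$ coincide with Lemmas~\ref{T} and~\ref{v}. The difficulty is entirely in the other boundary condition, and there your argument has a genuine gap.

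You establish only $Tu_0\ge u_0$ and then invoke Theorem~\ref{thm:du} with $\epsilon=0$. But $\epsilon=0$ cannot carry the uniqueness and global convergence conclusions: the identity operator on $[u_0,v_0]$ is increasing and concave, satisfies $Au_0\ge u_0$ and $Av_0\le v_0$, yet every point of the interval is a fixed point. (The closed interval $[0,1]$ in the statement of Theorem~\ref{thm:du} should be read as requiring a strictly positive $\epsilon$; the paper's own Lemma~\ref{inc_con} confirms this by proving $\epsilon\in(0,1)$.) Your lower bound $c(s-t)+g(k)+\delta c'(0)t\ge c'(0)s$ is tight at $t=0$, $k=1$, so it cannot be upgraded to a positive $\epsilon$ without additional work.

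The paper closes this gap in Lemma~\ref{inc_con} by computing $Tu_0$ explicitly. With $u_0(t/k)=c'(0)t/k$, the objective reduces to $c(s-t)+g(k)+\delta c'(0)t$, the minimum over $k$ is at $k=1$, and the minimum over $t$ is governed by the threshold $\bar s=\max\{s:c'(s)\le\delta c'(0)\}$: one obtains $Tu_0(s)=c(s)$ for $s<\bar s$ and $Tu_0(s)=c(\bar s)+\delta c'(0)(s-\bar s)$ for $s\ge\bar s$. In either case $Tu_0(s)>u_0(s)$ for every $s>0$; since $h_0=v_0-u_0$ also vanishes only at $0$ and $Tu_0-u_0=h_0$ on $[0,\bar s]$, a compactness argument on $[\bar s,1]$ yields a uniform $\epsilon\in(0,1)$ with $Tu_0\ge u_0+\epsilon h_0$. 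That is the step your proposal is missing.

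A minor remark: you defer the fact that $T$ maps $C(X)$ into $C(X)$ to ``well-posedness''; the paper handles this separately (Lemma~\ref{berge}) via Berge's maximum theorem, after using Assumption~\ref{g} to bound $k$ and make the feasible correspondence compact-valued.
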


This theorem ensures that there exists a unique price function in
equilibrium and it can be computed by successive evaluation of the operator
$T$ on any function located in that order interval\footnote{For the choice
  of the order interval we also follow \cite{kikuchi2018span}.}.
Furthermore, as is clear in the proof (see \ref{app:1}), the existence of
the minimizers $t^*(s)$ and $k^*(s)$ can also be proved, although they might
not be single valued for some $s$.

\subsection{Properties of the Solution}

In the case where each firm can only have one upstream partner, the
equilibrium price function is strictly increasing and strictly convex
\citep{kikuchi2018span}. In this model, however, complications arise since
firms at different stages might choose to have different numbers of upstream
partners. In fact, the equilibrium price is usually piece-wise convex due to
this fact. An example\footnote{The parameterization here is merely chosen to
  highlight the shape of the price function and is not economically
  realistic. The price is computed using a faster algorithm introduced in
  Section \ref{sec:com} with $m = 5000$ grid points instead of successive
  evaluation of $T$.}  of the equilibrium price function is plotted in Figure
\ref{fig:price_ex} where $c(s) = e^{10s} - 1$, $g(k) = \beta(k - 1)$ with
$\beta=50$, and $\delta = 10$. As is shown in the plot, the price function as
a whole is not convex, but it is piece-wise convex with each piece
corresponding to a choice of $k$. Monotonicity of $p^*$ remains true.

\begin{prop}
  \label{p_inc}
  The equilibrium price function $p^*:X\to \R_+$ is strictly increasing.
\end{prop}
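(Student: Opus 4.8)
The plan is to establish monotonicity in two stages: first show that $p^*$ is weakly increasing, and then bootstrap this to strict monotonicity using the fixed-point identity $p^* = Tp^*$ together with the strict monotonicity of $c$ and the hypothesis $\delta > 1$. Throughout I would use that membership $p^* \in [u_0, v_0]$ forces $p^*(0) = 0$ (since $u_0(0) = v_0(0) = 0$) and $p^*(s) \ge c'(0)s > 0$ for $s > 0$. It is convenient to reparametrize a firm's problem by the in-house amount $u = s - t \in [0, s]$, writing $Tp(s) = \min_{u \le s,\, k}\{ c(u) + g(k) + \delta k\, p((s-u)/k) \}$, and to denote an optimal plan at a point by $(u_2, k_2)$ with outsourced amount $t_2 = s_2 - u_2$.

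For weak monotonicity I would show that $T$ maps increasing functions to increasing functions and then invoke global stability (Theorem~\ref{main}): starting from the increasing function $u_0$, every iterate $T^n u_0$ is increasing, so its pointwise limit $p^*$ is increasing. To see that $T$ preserves monotonicity, fix $s_1 < s_2$ with minimizer $(u_2, k_2)$ at $s_2$. If $u_2 \le s_1$, the same plan is feasible at $s_1$ and, since $p$ is increasing and $(s_1 - u_2)/k_2 \le (s_2 - u_2)/k_2$, yields $Tp(s_1) \le Tp(s_2)$. If instead $u_2 > s_1$, then producing everything in house at $s_1$ gives $Tp(s_1) \le c(s_1)$, while $Tp(s_2) \ge c(u_2) > c(s_1)$ by strict monotonicity of $c$; here I use $p(0) = 0$ to discard the upstream term.

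To upgrade to strict monotonicity I would argue by contradiction: if $p^*$ were not strictly increasing, weak monotonicity produces a flat pair $s_1 < s_2$ with $p^*(s_1) = p^*(s_2)$, and I derive a contradiction by cases on the optimal plan $(u_2, k_2)$ at $s_2$. The first workhorse is an in-house shift: when $s_1 \ge t_2$ (equivalently $u_2 \ge s_2 - s_1$), the plan $(u_2 - (s_2 - s_1), k_2)$ is feasible at $s_1$ and leaves each upstream piece $t_2/k_2$ unchanged, so $p^*(s_1) \le c(u_2 - (s_2 - s_1)) + g(k_2) + \delta k_2\, p^*(t_2/k_2) < p^*(s_2)$, strictly, since $c$ is strictly increasing. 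The second workhorse handles $s_1 < t_2$ with $k_2 = 1$: then $p^*(s_2) \ge \delta\, p^*(t_2) > p^*(t_2) \ge p^*(s_1)$, where the strict middle step uses $\delta > 1$ and $p^*(t_2) > 0$, and the last uses $t_2 > s_1$ with weak monotonicity. A residual easy case is $u_2 > 0$ with $k_2 \ge 2$: outsourcing all of $s_1$ to $k_2$ partners gives $p^*(s_2) - p^*(s_1) \ge c(u_2) + \delta k_2\big(p^*(t_2/k_2) - p^*(s_1/k_2)\big) \ge c(u_2) > 0$, again a contradiction.

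The main obstacle is the one remaining regime, $u_2 = 0$ with $k_2 \ge 2$ (outsource everything to several partners). Here the comparison above degenerates to $p^*(s_2/k_2) \le p^*(s_1/k_2)$, which together with weak monotonicity forces $p^*(s_1/k_2) = p^*(s_2/k_2)$: the flat pair reproduces itself at the strictly smaller scale $s_2/k_2 \le s_2/2$. I would close the argument by infinite descent: iterating the reduction drives the larger endpoint to $0$, but for $s_2$ below the threshold $c^{-1}(g(2))$ any plan with $k_2 \ge 2$ costs at least $g(2) > c(s_2)$ and is therefore suboptimal, so the descent regime cannot occur at small scales and the reduced flat pair must fall into one of the easy cases above. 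The delicate point is precisely this interplay between the scale recursion and the self-reference of the fixed-point equation: one must check that the descent strictly decreases scale and that the base regime near $0$ is reached in finitely many steps, so that no circularity is introduced.
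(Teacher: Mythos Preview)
Your argument is correct, and it takes a genuinely different route from the paper. The paper's proof shows in one stroke that $T$ maps strictly increasing functions to strictly increasing functions: given strictly increasing $p$ and $s_1<s_2$ with minimizer $(t_2,k_2)$ at $s_2$, it splits on whether $t_2\le s_1$ or $s_1<t_2\le s_2$ and in each case obtains $Tp(s_2)>Tp(s_1)$. It then starts the iteration at $c$ and concludes that $p^*=\lim_n T^nc$ is strictly increasing. This is short, but the final passage to the limit is delicate: a uniform limit of strictly increasing functions is only guaranteed to be nondecreasing, so the last sentence is really an appeal to $p^*=Tp^*$ together with the weak monotonicity of $p^*$, which the paper does not make explicit.

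Your two–stage strategy sidesteps that issue entirely. You first secure weak monotonicity by iteration (where the limit step is unproblematic), and then extract strictness directly from the fixed–point identity via your four–case analysis, with the infinite–descent step handling the fully outsourced case $u_2=0$, $k_2\ge 2$ by shrinking the flat pair until $c(s_2)<g(2)$ forces $k_2=1$. The cost is a longer argument and the extra descent machinery; the gain is that you never pass a strict inequality through a limit, and you make transparent exactly where $\delta>1$, $c'(0)>0$, and the positivity of $g(2)$ enter. Both approaches ultimately hinge on the same two comparison moves (shrink in-house production when $t_2\le s_1$; shift the outsourced amount when $t_2>s_1$), but you apply them once to $p^*$ itself rather than to every iterate.
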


As for comparative statics, we have some basic results also present in
\cite{kikuchi2018span} about the effect of changing transaction costs on the
equilibrium price function. If either transaction cost ($\delta$ or $g$)
increases, the equilibrium price function also increases.

\begin{prop}
  \label{comp}
  If $\delta_a \leq \delta_b$, then $p^*_a \leq p^*_b$. Similarly, if $g_a
  \leq g_b$, $p^*_a \leq p^*_b$.
\end{prop}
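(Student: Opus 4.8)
The plan is to prove the two monotonicity statements separately, though both follow the same structure: establish a pointwise inequality between the operators $T_a$ and $T_b$ associated with the two parameter choices, and then leverage the global stability already guaranteed by Theorem~\ref{main} to pass from the operators to their fixed points. Let $T_a$ denote the operator in \eqref{eq:T} built with transaction cost coefficient $\delta_a$ (and cost $g_a$), and similarly for $T_b$. The key observation is that for \emph{every} fixed $p\in C(X)$ and every $s\in X$, the objective function $c(s-t) + g(k) + \delta k\, p(t/k)$ is monotone in each of the two transaction-cost parameters: increasing $\delta$ from $\delta_a$ to $\delta_b$ raises the term $\delta k\, p(t/k)$ pointwise (using that $p\geq 0$, $k\geq 1$, so the coefficient multiplies a nonnegative quantity), and replacing $g_a$ by a pointwise-larger $g_b$ raises the additive term. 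Since the minimum of a pointwise-larger family of functions is at least as large, I would conclude $T_a p \leq T_b p$ for all $p$ in the relevant order interval.

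The second and more delicate step is to transfer this operator inequality to the fixed points. First I would fix a common order interval $[u_0, v_0]$ on which both operators act and have their unique fixed points $p_a^*$ and $p_b^*$; here I would verify that $u_0(s)=c'(0)s$ and $v_0(s)=c(s)$ work simultaneously for both parameter settings, which is immediate since the bounding arguments in the proof of Theorem~\ref{main} only use $\delta>1$, $g$ increasing with $g(1)=0$, and Assumptions~\ref{c}--\ref{g}, all of which hold for both $(\delta_a,g_a)$ and $(\delta_b,g_b)$. Then, starting the iteration from the same initial point $p_a^*$ (the fixed point of $T_a$), I would argue by induction that $T_b^n p_a^* \geq T_b^n (\text{something} \leq p_a^*)$ and, more to the point, that $p_a^* = T_a p_a^* \leq T_b p_a^*$. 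Using monotonicity of $T_b$ (which is an increasing operator by the discussion preceding Theorem~\ref{main}), the inequality $p_a^* \leq T_b p_a^*$ propagates: $T_b p_a^* \leq T_b^2 p_a^*$, and inductively $T_b^n p_a^*$ is a nondecreasing sequence bounded above, which by the global stability in Theorem~\ref{main} converges to the unique fixed point $p_b^*$. Hence $p_a^* \leq p_b^*$.

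The main obstacle I anticipate is precisely this last monotone-iteration argument, specifically making sure the hypotheses line up cleanly. The subtlety is that Theorem~\ref{thm:du} gives convergence $T_b^n x_0 \to p_b^*$ for \emph{any} starting point $x_0\in[u_0,v_0]$, so I must confirm $p_a^*\in[u_0,v_0]$, which holds because $p_a^*$ is itself a fixed point in that same interval. Once $p_a^*\leq T_b p_a^*$ is established and $T_b$ is increasing, the sequence $\{T_b^n p_a^*\}$ is monotone nondecreasing and order-bounded by $v_0$; combining its convergence to $p_b^*$ with the fact that each term dominates $p_a^*$ yields $p_a^*\leq p_b^*$ in the limit, using that the cone of nonnegative functions is closed under pointwise limits so the order relation is preserved. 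The argument for the $g_a\leq g_b$ case is entirely parallel, substituting the additive-term comparison for the coefficient comparison, so I would state it and note the proof is identical \emph{mutatis mutandis}.
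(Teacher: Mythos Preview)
Your proposal is correct and follows essentially the same approach as the paper: establish the pointwise operator inequality $T_a p \le T_b p$ on $[u_0,v_0]$, then combine monotonicity of the operators with the global stability from Theorem~\ref{main} to compare the fixed points. The only cosmetic difference is that the paper iterates both operators simultaneously from a common starting point (using $T_a^n p \le T_b^n p$ by induction), whereas you iterate only $T_b$ starting from $p_a^*$; both routes yield the conclusion immediately.
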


In Figure~\ref{fig:price_comp}, we plot how the equilibrium price function
changes when transaction cost increases. The baseline model setting is the
same as Figure~\ref{fig:price_ex}. We can see that if $\delta$ or $\beta$
increases, the equilibrium price function also increases.

\begin{figure}[tb!]
  \centering
  \includegraphics[width=0.7\textwidth]{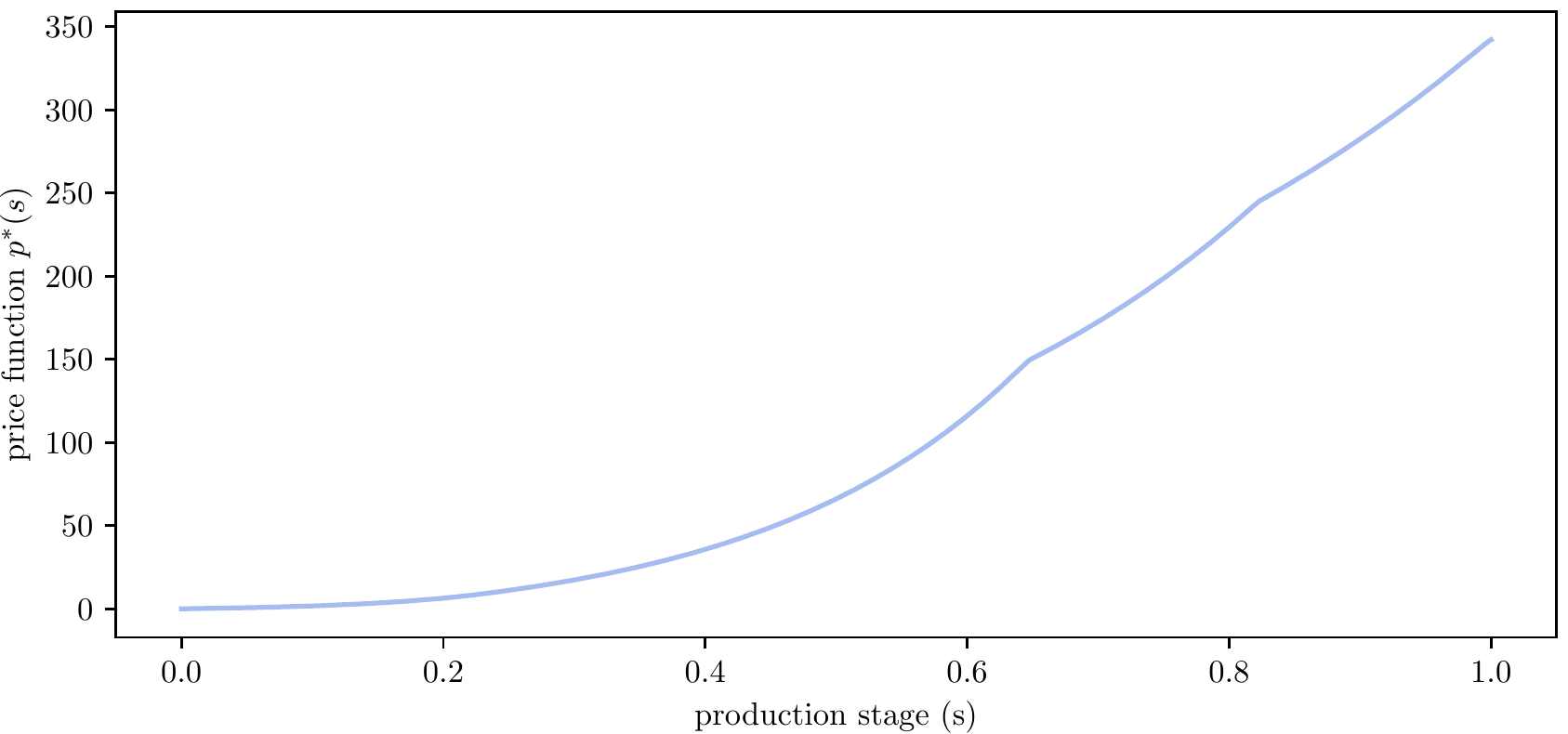}
  \caption{An example of equilibrium price function.}
  \label{fig:price_ex}
\end{figure}

\begin{figure}[tb!]
  \centering
  \includegraphics[width=0.7\textwidth]{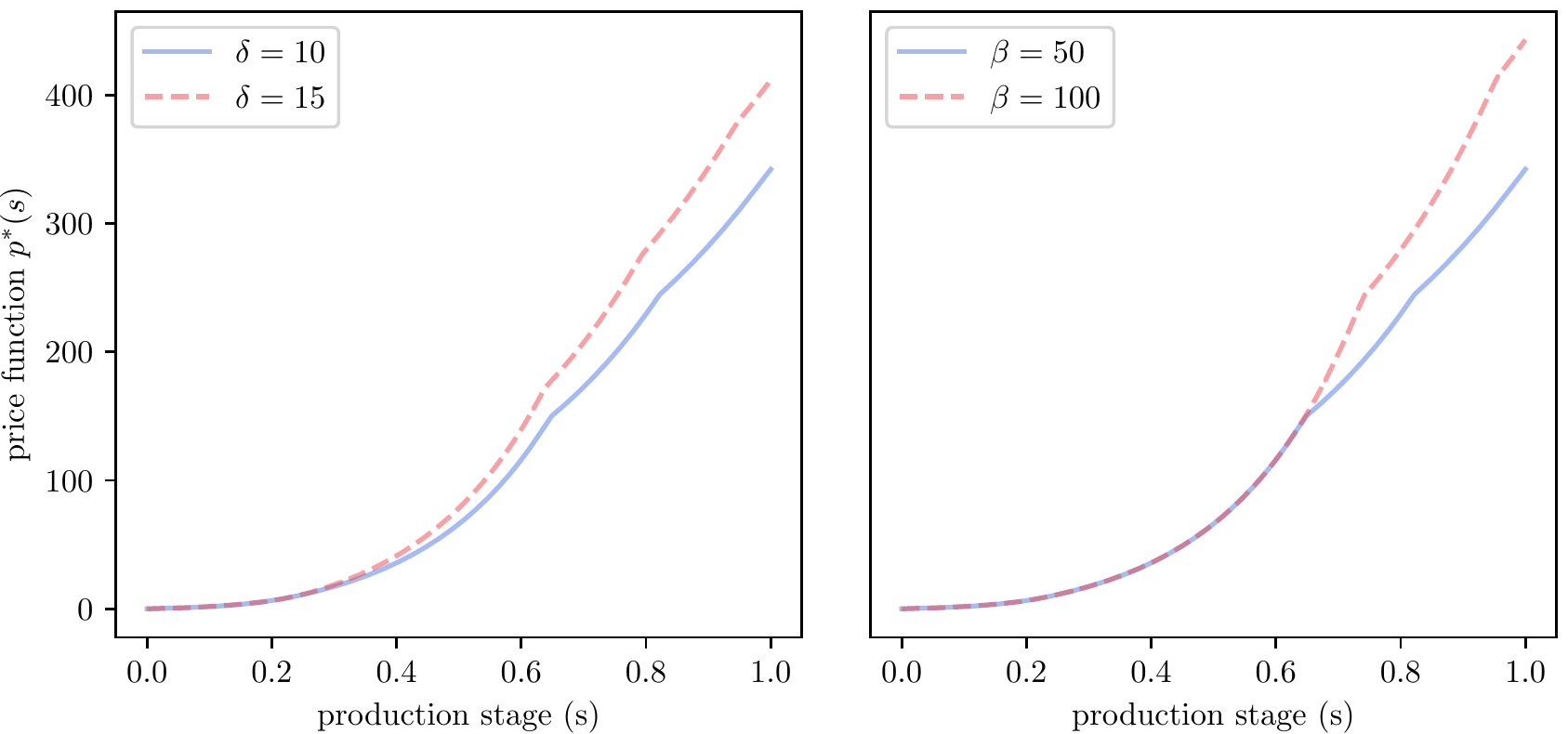}
  \caption{Equilibrium price function when $c(s) = e^{10s} - 1$ and $g(k) =
    \beta(k-1)$.}
  \label{fig:price_comp}
\end{figure}

\section{Computation}
\label{sec:com}

To compute an approximation to the equilibrium price function given
$\delta$, $c$, and $g$, one possibility is to take a function in $[u_0,
v_0]$ and iterate with $T$. However, in practice we can only approximate the
iterates, and, since $T$ is not a contraction mapping the rate of
convergence can be unsatisfactory for some model settings. On the other
hand, as we now show, there is a fast, non-iterative alternative that is
guaranteed to converge.

\begin{table}[b!]
  \centering
  \begin{tabular*}{\textwidth}{@{}l}
    \toprule
    \textbf{Algorithm 1} Construction of $p$ from $G = \{0, h, 2h, . . . ,
    1\}$\\ 
    \midrule
    \quad $p(0) \gets 0$\\
    \quad $s \gets h$\\
    \quad \textbf{while} $s\leq 1$ \textbf{do}\\
    \qquad evaluate $p(s)$ via equation (\ref{eq:alg})\\
    \qquad define $p$ on $[0, s]$ by linear interpolation of $p(0), p(h),
    p(2h), \ldots, p(s)$\\
    \qquad $s \gets s + h$\\
    \quad \textbf{end while}\\
    \bottomrule
  \end{tabular*}
\end{table}

Let $G = \{0, h, 2h, . . . , 1\}$ for fixed $h$. Given $G$, we define our
approximation $p$ to $p^*$ via the recursive procedure in Algorithm 1. In the
fourth line, the evaluation of $p(s)$ is by setting
\begin{equation}
  \label{eq:alg}
  p(s) = \min_{\substack{t\leq s - h\\ k\in\N}}\left\{ c(s - t) + g(k) +
    \delta k p(t/k) \right\}. 
\end{equation}
In line five, the linear interpolation is piecewise linear interpolation of
grid points $0, h, 2h, \ldots, s$ and values $p(0), p(h), p(2h), \ldots,
p(s)$.

The procedure can be implemented because the minimization step on the
right-hand side of (\ref{eq:alg}), which is used to compute $p(s)$, only
evaluates $p$ on $[0, s - h]$, and the values of $p$ on this set are
determined by previous iterations of the loop. Once the value $p(s)$ has been
computed, the following line extends $p$ from $[0, s - h]$ to the new interval
$[0, s]$. The process repeats. Once the algorithm completes, the resulting
function $p$ is defined on all of $[0, 1]$ and satisfies $p(0) = 0$ and
(\ref{eq:alg}) for all $s \in G$ with $s > 0$.

Now consider a sequence of grids $\{G_n\}$, and the corresponding functions
$\{p_n\}$ defined by Algorithm 1. Let $G_n = \{0, h_n, 2h_n, \ldots , 1\}$
with $h_n = 2^{-n}$. In this setting we have the following result, the proof
of which is given in \ref{app:2}.

\begin{thm}
  \label{alg_conv}
  If Assumption \ref{c} and \ref{g} hold, then $\{p_n\}$ converges to $p^*$
  uniformly.
\end{thm}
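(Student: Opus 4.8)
The plan is to prove convergence by a compactness-and-identification argument. First I would show that the family $\{p_n\}$ is uniformly bounded and equicontinuous, so that it is relatively compact in $C(X)$ by the Arzel\`a--Ascoli theorem. Next I would show that every uniform limit of a subsequence of $\{p_n\}$ is a fixed point of $T$ lying in the order interval $[u_0,v_0]$. Since Theorem~\ref{main} guarantees that $p^*$ is the \emph{unique} such fixed point, all subsequential limits coincide with $p^*$, and relative compactness then upgrades this to uniform convergence of the whole sequence (if some subsequence stayed bounded away from $p^*$, it would still admit a convergent sub-subsequence, whose limit would have to be $p^*$, a contradiction).

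For the uniform bounds I would argue by induction along the grid $G_n$ that $u_0(s)\le p_n(s)\le v_0(s)$ at every grid point $s$, where $u_0(s)=c'(0)s$ and $v_0(s)=c(s)$. The lower bound uses $c(s-t)\ge c'(0)(s-t)$ from convexity, $\delta k\,p_n(t/k)\ge c'(0)t$ from the inductive hypothesis together with $\delta>1$, and $g\ge 0$; the upper bound follows by evaluating the objective in (\ref{eq:alg}) at the feasible pair $t=0$, $k=1$, which is admissible once $s\ge h_n$ and returns exactly $c(s)$. Because piecewise-linear interpolation of values in $[0,c(1)]$ stays in $[0,c(1)]$, the $p_n$ are uniformly bounded. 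For equicontinuity I would establish two facts about the grid recursion: each $p_n$ is nondecreasing, proved by induction on grid points by comparing an optimizer $(t,k)$ of the problem for $p_n(s+h_n)$ with an admissible pair for $p_n(s)$ having the same in-house cost and a no-larger continuation value; and a one-step estimate $0\le p_n(s+h_n)-p_n(s)\le c'(1)h_n$, obtained by transporting an optimizer of the problem for $p_n(s)$ into the problem for $p_n(s+h_n)$ and bounding the increment of $c$ over a length-$h_n$ interval by $c'(1)h_n$. Together these make every $p_n$ Lipschitz with the common constant $c'(1)$, which yields equicontinuity.

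It then remains to identify a subsequential uniform limit $\bar p$ as a fixed point of $T$. For the inequality $\bar p\le T\bar p$, I fix any $(t,k)$ feasible for $T\bar p(s)$, pick grid points $s_j\to s$ and admissible $t_j\le s_j-h_{n_j}$ with $t_j\to t$ (taking $t_j=s_j-h_{n_j}$ when $t=s$), evaluate the recursion at $(t_j,k)$, and pass to the limit using uniform convergence of $p_{n_j}$ and equicontinuity of $\bar p$; infimizing over $(t,k)\in[0,s]\times\N$ gives the bound. For the reverse inequality I take optimizers $(t_j,k_j)$ of the grid problems: the bounds above force the $k_j$ to remain bounded (since $g(k_j)\le c(1)$ while $g(k)\to\infty$) and $t_j\in[0,s_j]$, so along a further subsequence $(t_j,k_j)\to(t_\infty,k_\infty)$ with $t_\infty\le s$, and passing to the limit exhibits $\bar p(s)$ as a feasible cost for $T\bar p(s)$. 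The same grid approximation of the bounds shows $\bar p\in[u_0,v_0]$, so $\bar p=p^*$ by Theorem~\ref{main}. I expect the main obstacle to be exactly this final identification: one must interchange a limit with a minimization while the feasible region is simultaneously being truncated from $t\le s$ to $t\le s-h_n$ and the objective uses the interpolated rather than exact values of $p_n$, so the delicate point is to verify that neither the truncation near $t=s$ nor the interpolation error persists in the limit.
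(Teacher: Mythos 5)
Your proposal is correct and follows essentially the same route as the paper's proof: monotonicity of $p_n$ plus the one-step estimate $p_n(s+h_n)-p_n(s)\le c'(1)h_n$ give equicontinuity, Arzel\`a--Ascoli gives relative compactness, subsequential limits are identified as fixed points of $T$ lying in $[u_0,v_0]$ (including the induction showing $c'(0)s\le p_n(s)\le c(s)$ and the boundedness of the optimal $k$ via $g(k)\to\infty$), and uniqueness from Theorem~\ref{main} forces every limit to be $p^*$. The only cosmetic differences are that you pass to limits of extracted optimizers $(t_j,k_j)$ where the paper uses two-sided $\epsilon$-estimates on the minima over the dyadic rationals, and that you spell out the final sub-subsequence argument that the paper leaves implicit.
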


\begin{figure}[t]
  \centering
  \includegraphics[width=0.7\textwidth]{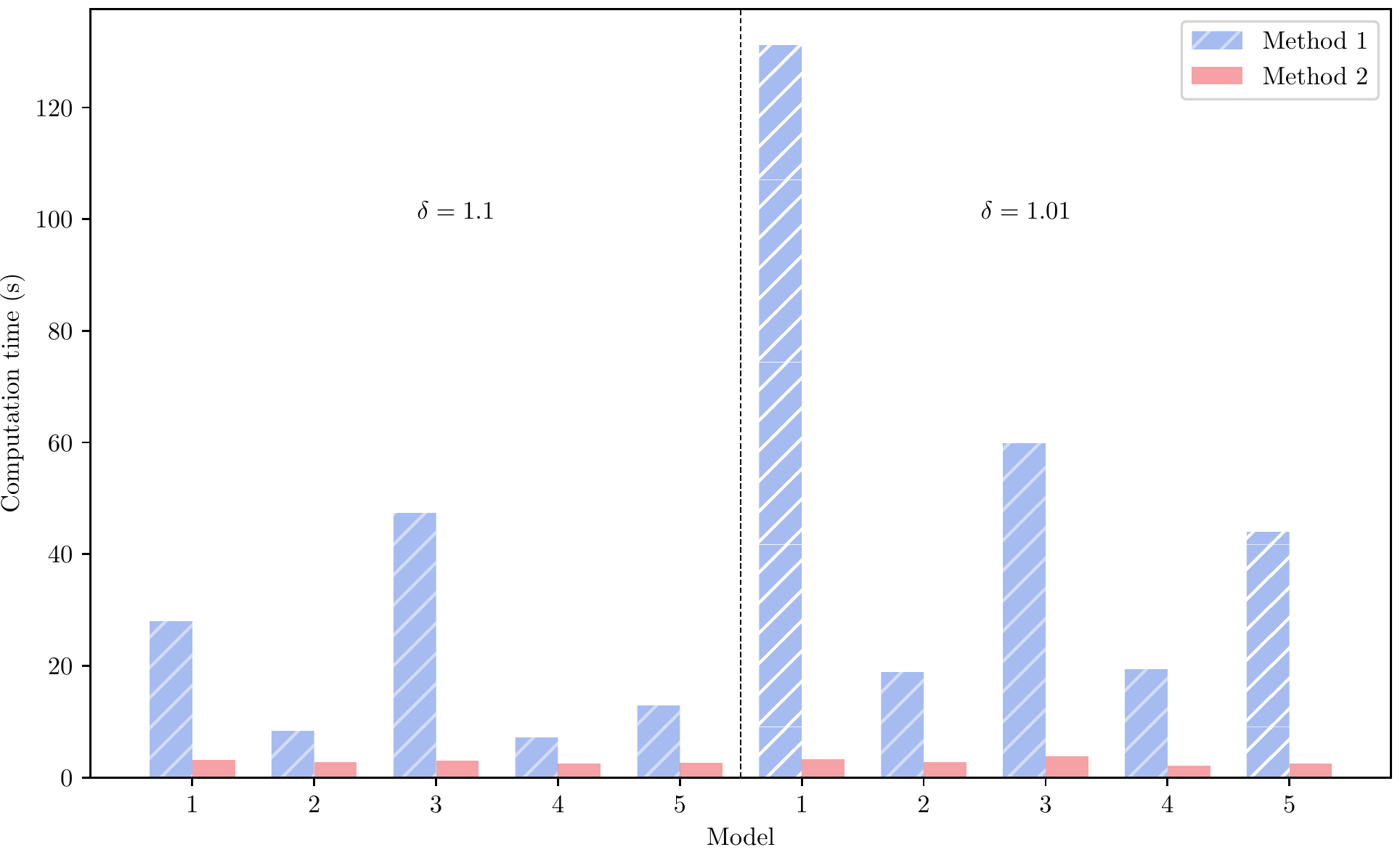}
  \caption{Computation time comparison for the two methods.}
  \label{fig:speed}
\end{figure}

The main advantage of this algorithm is that, for any chosen number of grid
points, the number of minimization operations required is fixed, and we can
improve the accuracy of this algorithm by increasing the number of grid
points. For the iteration method, however, the rate of convergence is
different for different model settings and to achieve the same accuracy it
usually requires longer computation time.

In Figure \ref{fig:speed}, we plot the computation time\footnote{The
  computations were conducted on a XPS 13 9360 laptop with i7-7500U CPU. The
  program only utilizes a single core.} of successive iterations of $T$ with
$p_0 = c$ (method 1) and Algorithm 1 (method 2) for ten different model
settings when the number of grid points is set to be $m = 1000$. The first
and last five models are the same\footnote{The cost function $c$ and
  additive transaction cost function $g$ for the five models are: (1) $c(s)
  = e^{10s} - 1$, $g(k) = k-1$; (2) $c(s) = e^{s} - 1$, $g(k) = 0.01(k-1)$;
  (3) $c(s) = e^{s^2} - 1$, $g(k) = 0.01(k-1)$; (4) $c(s) = s^2 + s$, $g(k)
  = 0.01(k-1)$; (5) $c(s) = e^{s} + s^2 - 1$, $g(k) = 0.05(k-1)$.} except
$\delta=1.1$ for the former and $\delta=1.01$ for the latter. In each model,
we also compute an accurate price function using Algorithm 1 with a very
large number of grid points ($m = 50000$) and compare it with results from
both methods when $m = 1000$. We find that the error from method 2 is
comparable or smaller than that from method 1 in each model. The algorithm
achieves more accurate results at a much faster speed. As we can see in
Figure~\ref{fig:speed}, method 2 completes the computation in around 3
seconds in each model while the computation time of method 1 ranges from 7
seconds to more than 2 minutes. The speed difference is especially drastic
when $\delta$ is close to 1, since it takes $T$ more iterations to converge
with smaller $\delta$ but the number of operations for the algorithm is
fixed. In model 1 with $\delta = 1.01$, the algorithm is 40 times faster
than successive iterations of $T$!

\section{Stochastic Choices}
\label{sec:st}

So far we have discussed the case in which each firm can choose the optimal
number of upstream partners according to (\ref{eq:prob}). In reality,
however, firms usually face uncertainty when choosing their partners. The
result is that some firms might choose fewer or more partners than what is
optimal. For instance, a firm might not be able to choose a certain number
of upstream partners due to regulation or failure to arrive at agreements
with potential partners. Conversely, the upstream partners of a firm might
experience supply shocks and fail to meet production requirements, causing
it to sign more partners than what is optimal and bear more transaction
costs. In this section, we model this scenario and incorporate uncertainty
into each firm's optimization problem.

We assume that each firm chooses an amount of ``search effort'' $\lambda$ and
the resulting number of upstream partners follows a Poisson
distribution\footnote{Note that in the usual sense, if a random variable $X$
  follows the Poisson distribution, $X$ takes values in nonnegative
  integers. Here we shift the probability function so that $k$ starts from 1.}
with parameter $\lambda$ that starts from $k=1$. In other words, the
probability of having $k$ partners is
\begin{equation*}
  f(k;\lambda) = \frac{\lambda^{k-1} e^{-\lambda}}{(k-1)!}
\end{equation*}
when $\lambda > 0$. We also assume that when $\lambda=0$, Prob$(n = 1) = 1$,
that is, each firm can always choose to have only one upstream partner with
certainty. For example, if a firm chooses to exert effort $\lambda = 2.5$, the
probabilities of it ending up with 1, 2, 3, 4, 5 partners are, respectively:
0.08, 0.2, 0.26, 0.21, 0.13. One characteristic of the Poisson distribution is
that both its mean and variance increase with $\lambda$, which makes it
suitable for our model since the more partners a firm aims for, the more
uncertainty there will be in the contracting process.

Hence, a firm at stage $s$ solves the following problem:
\begin{equation}
  \label{eq:prob2}
  \min_{\substack{t\leq s\\ \lambda\geq 0}}\left\{ c(s - t) +
    \E_k^\lambda \left[g(k) + \delta k p(t/k) \right] \right\}
\end{equation}
where $\E_k^\lambda$ stands for taking expectation of $k$ under the Poisson
distribution with parameter $\lambda$. Specifically,
\begin{equation*}
  \E_k^\lambda \left[g(k) + \delta k p(t/k) \right] = \sum_{k=1}^\infty
  \left[g(k) + \delta k p(t/k) \right] f(k;\lambda).
\end{equation*}
Similar to Section \ref{sec:eq}, we can define another operator $\tilde{T}:
C(X) \to R(X)$ by
\begin{equation}
  \label{eq:T2}
  \tilde{T}p(s):=   \min_{\substack{t\leq s\\ \lambda\geq 0}}\left\{ c(s - t) +
    \E_k^\lambda \left[g(k) + \delta k p(t/k) \right] \right\}.
\end{equation}
As will be shown in \ref{app:3}, all of the above results still apply
in the stochastic case and we summarize them in the following theorem.
\begin{thm}
  \label{main:st}
  Let $u_0(s) = c'(0)s$, $v_0(s) = c(s)$. If Assumption \ref{c} and \ref{g}
  hold, then the operator $\tilde{T}$ has a unique fixed point $\tilde{p}^*$
  in $[u_0, v_0]$ and $\tilde{T}^np \to \tilde{p}^*$ for any $p\in [u_0,
  v_0]$. Furthermore, $\tilde{p}_n$ from Algorithm 1 converges to
  $\tilde{p}^*$ uniformly.
\end{thm}

By Theorem \ref{main:st}, there exists a unique equilibrium price function
$\tilde{p}^*$ and we can compute it either by successive evaluation of
$\tilde{T}$ or by Algorithm 1. The algorithm is particularly useful here since
it now takes much longer time to complete one minimization operation with
firms choosing continuous values of $\lambda$ instead of discreet values of
$k$.

Similarly, there exist minimizers $t^*$ and $\lambda^*$ so that firm at any
stage $s$ has an optimal choice $t^*(s)$ and $\lambda^*(s)$. With the optimal
choice functions, we can compute an equilibrium firm allocation recursively as
in \cite{kikuchi2018span}. Specifically, we start at the most downstream
firm at $s=1$ and compute its optimal choices $t^*$ and $\lambda^*$. Next, we
pick a realization of $k$ according to the Poisson distribution with parameter
$\lambda^*$ and repeat the process for each of its upstream firm at $s' =
t^*/k$. The whole process ends when all the most upstream firms choose to
carry out the remaining production process by themselves. Note that due to the
stochastic nature of this model, each simulation will give a different firm
allocation.

\begin{figure}[tb!]
  \centering
  \begin{subfigure}{0.5\textwidth}
    \centering
    \includegraphics[height=0.88\textwidth]{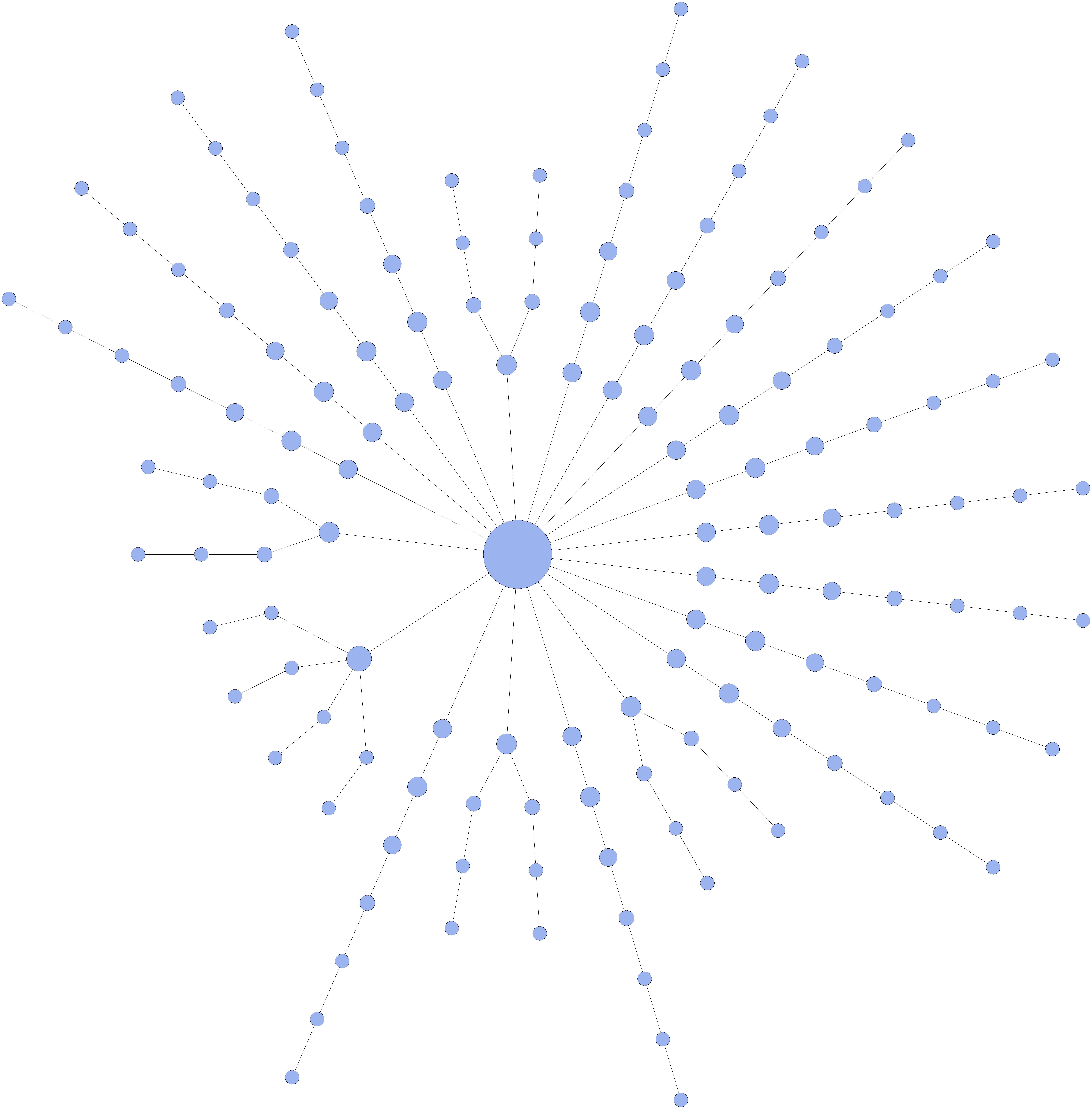}
    \caption{$\beta = 0.0005$, $\delta = 1.05$, $\theta = 1.2$}
  \end{subfigure}%
  \begin{subfigure}{0.5\textwidth}
    \centering
    \includegraphics[height=0.88\textwidth]{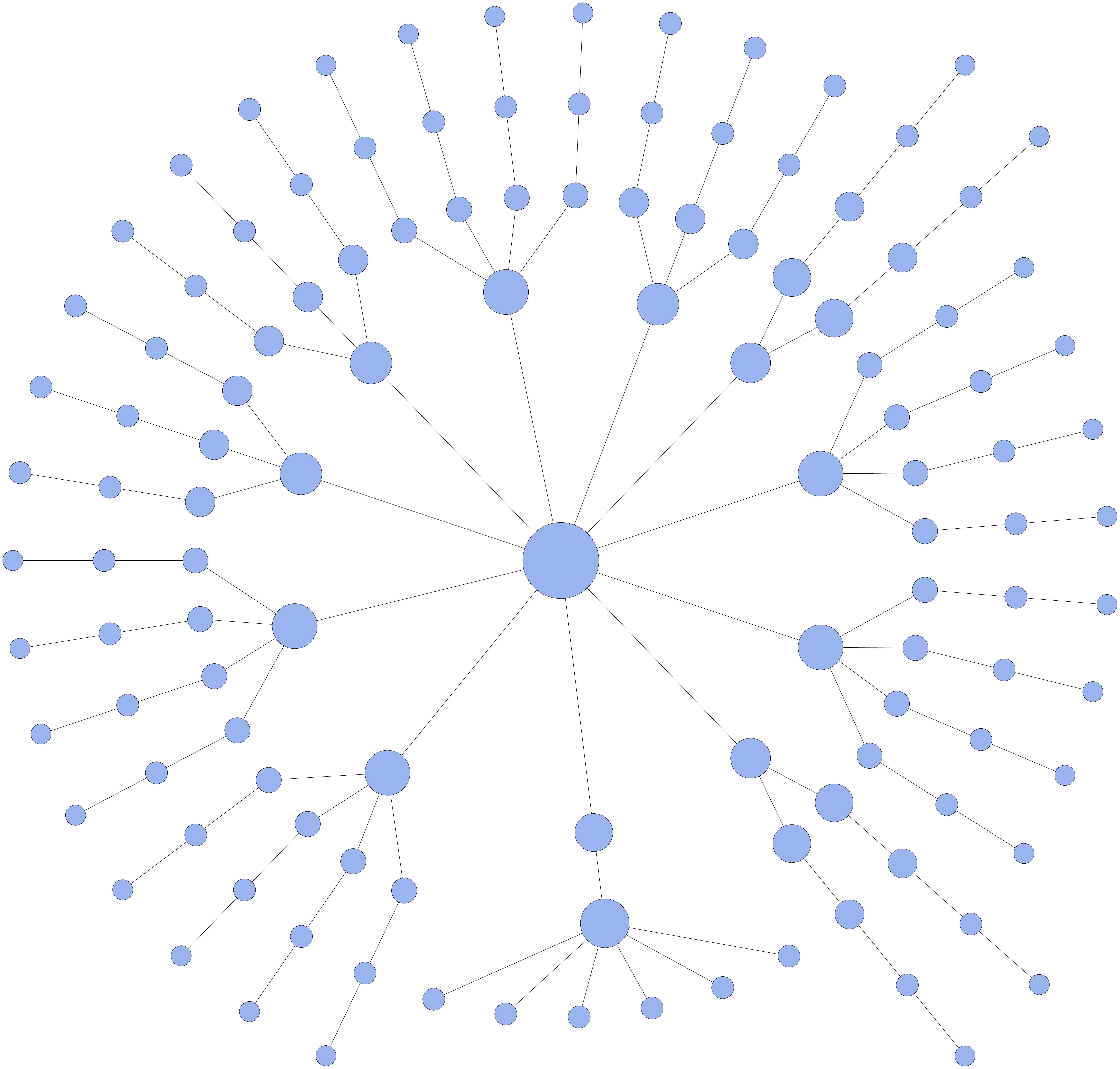}
    \caption{$\beta = 0.0005$, $\delta = 1.1$, $\theta = 1.2$}
  \end{subfigure}
  \begin{subfigure}{0.5\textwidth}
    \centering
    \includegraphics[height=0.88\textwidth]{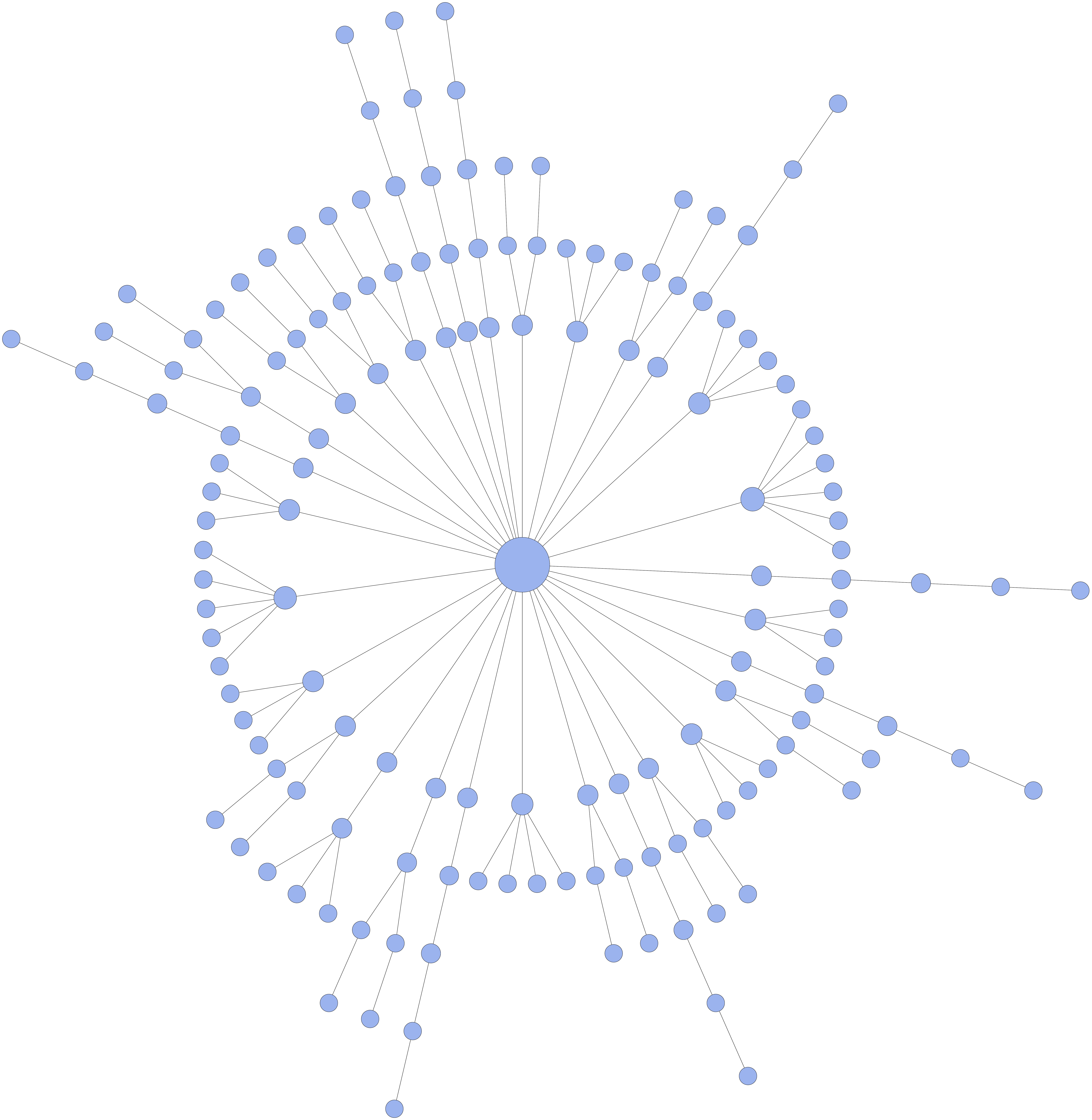}
    \caption{$\beta = 0.0001$, $\delta = 1.05$, $\theta = 1.2$}
  \end{subfigure}%
  \begin{subfigure}{0.5\textwidth}
    \centering
    \includegraphics[height=0.88\textwidth]{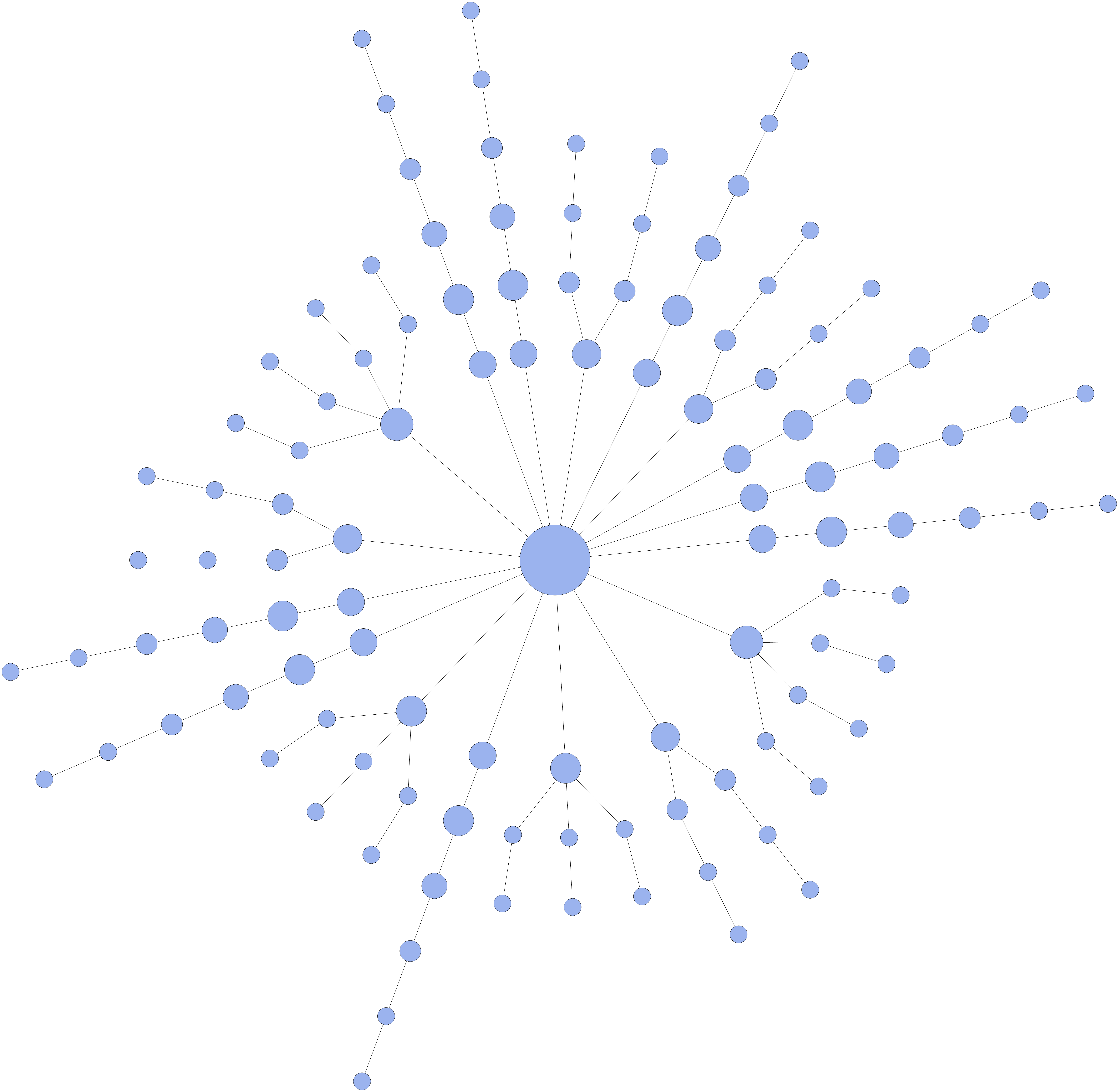}
    \caption{$\beta = 0.0005$, $\delta = 1.05$, $\theta = 1.15$}
  \end{subfigure}
  \caption{Production networks with stochastic choices of upstream partners}
  \label{fig:network}
\end{figure}

In Figure~\ref{fig:network}, we plot some production networks for different
model parameterizations using the above approach. Each node represents a
firm and the one at the center is the firm at $s = 1$. The size of each node
is proportionate to the size\footnote{Here the firm size is calculated using
  its value added $c(s - t^*) + g(k)$ where $k$ is a realization of the
  Poisson distribution with parameter $\lambda^*$.} of the corresponding
firm. The cost function is set to be $c(s) = s^\theta$ and the additive
transaction cost is $g(k) = \beta (k-1)^{1.5}$. Compared with production
networks in \cite{kikuchi2018span}, the graphs here are no longer
symmetric since even firms on the same layer can have different realized
numbers of upstream partners and thus different firm sizes. The prediction
that downstream firms are larger and tend to have more subcontractors, on
the other hand, is also valid in our networks.

Comparing (a) and (b), an increase in transaction cost makes firms in (b)
outsource less and produce more in-house, resulting in fewer layers in the
production network. Similarly, comparing (c) with (a), a decrease in
additive transaction costs encourages firms at each level to find more
subcontractors. The results are more but smaller firms at each level and
fewer layers in the network. Comparing (d) with (a), the difference is a
decrease in curvature of the cost function $c$, which makes outsourcing less
appealing. The firms in (d) tend to produce more in-house, resulting in a
production network of fewer layers.

\section{Conclusion}
\label{sec:con}

In this paper, we extend the production chain model of
\cite{kikuchi2018span} to more realistic settings, in which each firm
can have multiple upstream partners and face uncertainty in the contracting
process. We prove the uniqueness of equilibrium price for these extensions
and propose a fast algorithm for computing the price function that is
guaranteed to converge.

The key to proving uniqueness of equilibrium price in this model is the
theory of monotone concave operators, which gives sufficient conditions for
existence, uniqueness, and convergence. This theory has been proven useful
in finding equilibria in a range of economic models as mentioned in the
introduction and can potentially be applied to more problems where
contraction property is hard to establish.

Our model also has some predictions regarding the shape of production
networks and the size distribution of firms. In our extended model with
uncertainty, we generate a series of production networks
(Figure~\ref{fig:network}) under different model settings. A notable
observation from this exercise is that increasing the proportionate
transaction cost $\delta$ or decreasing the additive transaction cost $g$
will reduce the number of layers in a network. In the former case, the cost
of market transactions increases; this encourages vertical integration and
hence leads to larger firms along each chain. In the latter case, the cost
of maintaining multiple partners decreases; this discourages lateral
integration and leads to more firms in each layer. This prediction can
potentially be tested with suitable choice of proxies for $\delta$ and $g$.

Another observation is that different model settings lead to different size
distributions of firms. For example, smaller $\delta$ seems to lead to more
extreme differences in firm sizes as shown in the comparison between (a) and
(b) in Figure~\ref{fig:network}. The underlying mechanism is unclear in our
model, which provides a possible channel for future research.

A notable feature of our model is that firms are ex-ante identical but
ex-post heterogeneous in equilibrium in terms of sizes, positions in a
network, and number of subcontractors. However, the cost function $c$ and
transaction costs $\delta$ and $g$ are assumed to be fixed throughout this
paper. Introducing heterogeneity into these costs might offer richer
implications for firm distribution and industry policies. We also leave this
possibility for future research.

\appendix
\renewcommand{\thethm}{\Alph{section}.\arabic{thm}}
\section{Proofs from Section \ref{sec:eq}}
\label{app:1}

Let $U = \N \times [0, 1]$ equipped with the Euclidean metric in $\R^2$ and
$X$ be equipped with the Euclidean metric in $\R$.  To simplify notation, we
can write $T$ as
\begin{equation*}
  Tp(s) = \min_{(k, t) \in \Theta(s)} f_p(s, k, t)
\end{equation*}
where $\Theta:X \to U$ is a correspondence defined by $\Theta(s) = \N \times
[0, s]$, and $f_p(s, k, t) = c(s - t) + g(k) + \delta k p(t/k)$.

\begin{lemma}
  \label{berge}
  $Tp \in C([0, 1])$ for all $p\in C([0, 1])$.
\end{lemma}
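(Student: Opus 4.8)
The plan is to read $Tp(s)=\min_{(k,t)\in\Theta(s)}f_p(s,k,t)$ as a parametric minimization and invoke Berge's Maximum Theorem, which delivers continuity of the value function once the objective is jointly continuous and the constraint correspondence is continuous and compact-valued. The objective causes no trouble: for each fixed $k$, $f_p(s,k,t)=c(s-t)+g(k)+\delta k\,p(t/k)$ is a composition of continuous maps in $(s,t)$, and because the $k$-coordinate is isolated in the Euclidean metric on $U=\N\times[0,1]$, joint continuity in $(s,k,t)$ reduces to continuity in $(s,t)$ for each $k$. The genuine obstacle is that the factor $\N$ of $\Theta(s)=\N\times[0,s]$ is not compact, so Berge's theorem does not apply as stated.

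I would overcome this by truncating the choice of $k$ to a finite set, uniformly in $s$, using Assumption~\ref{g}. Evaluating $f_p$ at the feasible point $(k,t)=(1,0)$ and using $g(1)=0$ gives the uniform upper bound $Tp(s)\le f_p(s,1,0)=c(s)+\delta p(0)\le c(1)+\delta p(0)$, since $c$ is increasing. On the other hand, nonnegativity of $p$ makes every term of $f_p$ nonnegative, so $f_p(s,k,t)\ge g(k)$; as $g(k)\to\infty$ there is a finite $K$, independent of $s$, with $g(K)>c(1)+\delta p(0)$. Any $k\ge K$ is then strictly suboptimal, so the minimum over $\N$ equals the minimum over $\{1,\dots,K\}$ for every $s\in[0,1]$. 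This truncation is the crux of the argument.

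It then remains to check Berge's hypotheses for the reduced correspondence $\tilde\Theta(s)=\{1,\dots,K\}\times[0,s]$. The map $s\mapsto[0,s]$ is compact-valued, upper hemicontinuous (closed graph together with the uniform bound $[0,s]\subseteq[0,1]$), and lower hemicontinuous (for $s_n\to s$ with $s>0$, any $t\in[0,s]$ is reached by $t\,s_n/s\in[0,s_n]$, and the case $s=0$ is trivial); adjoining the fixed finite factor preserves these properties. Since $f_p$ is continuous on the compact domain $[0,1]\times\{1,\dots,K\}\times[0,1]$, Berge's Maximum Theorem gives continuity of $s\mapsto\min_{(k,t)\in\tilde\Theta(s)}f_p(s,k,t)$, which by the truncation equals $Tp$; hence $Tp\in C([0,1])$. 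The one point to flag is that the lower bound $f_p\ge g(k)$ relies on $p\ge 0$, which is exactly what guarantees the minimum is attained and finite on the order interval $[u_0,v_0]$ where $T$ is applied, so that $T$ indeed maps into $R(X)$.
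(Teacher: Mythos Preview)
Your proof is correct and follows essentially the same approach as the paper: truncate $k$ to a finite set using Assumption~\ref{g}, then apply Berge's Maximum Theorem to the resulting compact-valued continuous correspondence. Your version is in fact more explicit than the paper's---you give concrete bounds ($Tp(s)\le c(1)+\delta p(0)$ and $f_p\ge g(k)$) to justify the truncation, and you rightly flag that the latter requires $p\ge 0$, a point the paper leaves implicit but which is harmless since $T$ is only ever applied on the order interval $[u_0,v_0]\subset C_+(X)$.
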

\begin{proof}
  We use Berge's theorem to prove continuity. By Assumption \ref{g}, we can
  restrict $\Theta$ to be $\Theta(s) = \{1, 2, \ldots, \bar{k}\} \times [0,
  s]$ for some large $\bar{k}\in \N$. Then $\Theta$ is compact-valued.

  To see $\Theta$ is upper hemicontinuous, note $\Theta(s)$ is closed for
  all $s\in X$. Since the graph of $\Theta$ is also closed, by the Closed
  Graph Theorem \citep[see, e.g.,][p. 565]{aliprantis2006infinite}, $\Theta$
  is upper hemicontinuous on $X$.


  To check for lower hemicontinuity, fix $s \in X$. Let $V$ be any open set
  intersecting $\Theta(s) = \{1, 2, \ldots, \bar{k}\} \times [0, s]$. Then it
  is easy to see that we can find a small $\epsilon > 0$ such that $\Theta(s')
  \cap V \neq \emptyset$ for all $s' \in [s-\epsilon, s+\epsilon]$. Hence
  $\Theta$ is lower hemicontinuous on $X$.

  Because $p\in C([0, 1])$, $f_p$ is jointly continuous in its three
  arguments. By Berge's theorem, $Tp$ is continuous on $X$.
\end{proof}

Note that by Berge's theorem, the minimizers $t^*$ and $k^*$ exist and are
upper hemicontinuous.


\begin{lemma}
  \label{T}
  T is increasing and concave.
\end{lemma}
\begin{proof}
  It is apparent that $T$ is increasing. To see $T$ is concave, let $p, q\in
  C([0, 1])$ and $\alpha \in (0, 1)$. Then we have
  \begin{align*}
    \alpha Tp(s) + (1-\alpha) Tq(s)
    &= \min_{(k, t)\in \Theta(s)}\alpha f_p(s, k, t) + \min_{(k, t)\in
      \Theta(s)}(1 - \alpha) f_q(s, k, t)\\
    &\leq \min_{(k, t)\in \Theta(s)} \left\{\alpha f_p(s, k, t) + (1-\alpha)
      f_q(s, k, t)\right\}\\
    &= \min_{(k, t)\in \Theta(s)} \left\{c(s - t) + g(k) + \delta k\left[\alpha
      p(t/k) + (1-\alpha)q(t/k)\right] \right\}\\
    &= \min_{(k, t)\in \Theta(s)} f_{\alpha p + (1-\alpha) q}(s, k, t)\\
    &= T\left[\alpha p + (1-\alpha) q\right](s)
  \end{align*}
  which completes the proof.
\end{proof}

\begin{lemma}
  \label{inc_con}
  $Tu_0 \geq u_0 + \epsilon(v_0 - u_0)$ for some $\epsilon \in (0, 1)$.
\end{lemma}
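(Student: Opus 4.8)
The plan is to compute $Tu_0$ in closed form and then bound the ratio $(Tu_0 - u_0)/(v_0 - u_0)$ below by a positive constant. First I would substitute $u_0(t/k) = c'(0)(t/k)$ into the definition of $T$, which gives $\delta k\, u_0(t/k) = \delta c'(0) t$, independent of $k$. Since $g$ is strictly increasing with $g(1) = 0$ (Assumption~\ref{g}) and the remaining terms do not involve $k$, the minimization over $k$ is attained at $k = 1$, so that
\[
  Tu_0(s) = \min_{0 \le t \le s}\left\{ c(s - t) + \delta c'(0)\, t \right\}.
\]

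Next I would analyze this one-dimensional problem. Writing $\phi(t) = c(s-t) + \delta c'(0) t$, Assumption~\ref{c} makes $\phi$ convex with $\phi'(t) = -c'(s-t) + \delta c'(0)$ increasing in $t$; since $\phi'(s) = (\delta - 1)c'(0) > 0$, the endpoint $t = s$ is never optimal. Two cases remain, according to the sign of $\phi'(0) = \delta c'(0) - c'(s)$. When $c'(s) \le \delta c'(0)$ the minimizer is $t = 0$ and $Tu_0(s) = c(s) = v_0(s)$. When $c'(s) > \delta c'(0)$ the minimizer is interior and characterized by $c'(s - t^*) = \delta c'(0)$; letting $\tau \in (0,1)$ be the point with $c'(\tau) = \delta c'(0)$ (it exists and is unique in this case, since $c'$ is continuous and strictly increasing), this forces $s - t^* = \tau$ and hence $Tu_0(s) = c(\tau) + \delta c'(0)(s - \tau)$, a linear function of $s$.

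With $Tu_0$ in hand I would set $h_0 = v_0 - u_0$ and $\psi(s) = (Tu_0(s) - u_0(s))/h_0(s)$ for $s \in (0,1]$. Strict convexity of $c$ together with $c(0)=0$ gives $c(s) > c'(0)s$, so $h_0(s) > 0$ on $(0,1]$ and $h_0$ is strictly increasing. On the first region $\psi \equiv 1$. On the second region a short computation gives $Tu_0(s) - u_0(s) = h_0(\tau) + (\delta - 1)c'(0)(s - \tau) \ge h_0(\tau) > 0$, so that $\psi(s) \ge h_0(\tau)/h_0(1)$. Taking $\epsilon = h_0(\tau)/h_0(1)$ (or any value in $(0,1)$ if the second region is empty, in which case $Tu_0 = v_0$) yields $Tu_0 \ge u_0 + \epsilon h_0$ with $\epsilon \in (0,1)$.

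I expect the main obstacle to be securing a uniform, strictly positive lower bound for $\psi$ rather than mere pointwise positivity: both numerator and denominator vanish at $s = 0$, so the bound must come from the explicit form of $Tu_0$ together with compactness of $X = [0,1]$ and the fact that $h_0$ is bounded above by $h_0(1)$ while staying above $h_0(\tau)$ on the relevant region. The collapse of the $k$-minimization to $k=1$—a consequence of $u_0$ being linear and homogeneous—is the observation that makes the whole computation tractable.
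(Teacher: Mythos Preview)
Your proposal is correct and follows essentially the same route as the paper: both collapse the $k$-minimization to $k=1$ via the homogeneity of $u_0$, solve the resulting one-variable problem to obtain the piecewise formula $Tu_0(s)=c(s)$ for $s\le\bar s$ and $Tu_0(s)=c(\bar s)+\delta c'(0)(s-\bar s)$ for $s\ge\bar s$ (your $\tau$ is the paper's $\bar s$), and then extract $\epsilon$. The only difference is that the paper stops at the observation ``$Tu_0(s)>u_0(s)$ for all $s$ except $0$, so such an $\epsilon$ exists,'' whereas you go further and produce the explicit constant $\epsilon=h_0(\tau)/h_0(1)$, thereby resolving the $0/0$ issue at $s=0$ that the paper leaves implicit.
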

\begin{proof}
  Define $\bar{s} := \max\{0 \leq s \leq 1: c'(s) \leq \delta c'(0)\}$. Then
  we have
  \begin{align*}
    Tu_0(s) &= \min_{(k, t)\in \Theta(s)} f_{u_0}(s, k, t)\\
            &= \min_{(k, t)\in \Theta(s)} \left\{ c(s-t) + g(k) + \delta
              c'(0)t \right\} \\
            &= \min_{t\leq s} \left\{ c(s -t) + \delta c'(0) t\right\}\\
            &=
              \begin{cases}
                c(\bar{s}) + \delta c'(0) (s - \bar{s}), \quad &\text{if }
                s\geq \bar{s}\\
                c(s), \quad &\text{if } s<\bar{s}
              \end{cases}
  \end{align*}
  Since $Tu_0(s) > u_0(s)$ for all $s$ except at 0, we can find $\epsilon \in
  (0, 1)$ such that $Tu_0 \geq u_0 + \epsilon(v_0 - u_0)$.
\end{proof}

\begin{lemma}
  \label{v}
  $Tv_0 \leq v_0$.
\end{lemma}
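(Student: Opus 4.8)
The plan is to establish the inequality pointwise by exhibiting a single feasible choice in the minimization defining $Tv_0(s)$ that already attains the value $v_0(s)$. Since $Tv_0(s)$ is a minimum over the feasible set $\Theta(s) = \N \times [0, s]$, it suffices to produce one pair $(k, t) \in \Theta(s)$ with $f_{v_0}(s, k, t) \leq v_0(s)$; the minimum can then only be smaller.

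The natural candidate is the ``no outsourcing'' option $(k, t) = (1, 0)$, which lies in $\Theta(s)$ for every $s \in X$ because $0 \leq s$ and $1 \in \N$. Under this choice the firm produces all of $s$ in-house, so the first term is $c(s - 0) = c(s) = v_0(s)$. The remaining two terms vanish: the additive transaction cost is $g(1) = 0$ by Assumption \ref{g}, and the proportionate term is $\delta \cdot 1 \cdot v_0(0/1) = \delta\, v_0(0) = \delta\, c(0) = 0$ by Assumption \ref{c}. Hence $f_{v_0}(s, 1, 0) = c(s)$, and minimizing over $\Theta(s)$ gives $Tv_0(s) \leq c(s) = v_0(s)$ for every $s$, that is, $Tv_0 \leq v_0$.

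There is no genuine obstacle here; the only facts to verify are that the candidate is feasible for all $s$ and that both $g(1) = 0$ and $c(0) = 0$, each of which is guaranteed directly by the assumptions. Economically, the argument merely records that any firm can secure a price no higher than $c(s)$ by carrying out the entire production process itself, so $v_0 = c$ is automatically a supersolution of $T$ and thus supplies the upper endpoint required for Theorem~\ref{thm:du}.
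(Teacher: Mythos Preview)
Your proof is correct and follows exactly the paper's approach: pick the feasible pair $(k,t)=(1,0)$, use $g(1)=0$ and $c(0)=0$ from Assumptions~\ref{g} and~\ref{c}, and conclude $Tv_0(s)\leq c(s)=v_0(s)$ for all $s$.
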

\begin{proof}
  Choose $k = 1$ and $t = 0$. We have $Tv_0(s) \leq c(s -0) + g(1) + \delta
  c(0) = c(s) = v_0(s)$.
\end{proof}

\begin{proof}[Proof of Theorem \ref{main}]
  Since $P = \left\{f\in C(X): f(x) \geq 0 \text{ for all } x\in X\right\}$ is
  a normal cone, the theorem follows from the previous lemmas and Theorem
  \ref{thm:du}.
\end{proof}

\begin{proof}[Proof of Proposition \ref{p_inc}]
  We first show that $T$ maps a strictly increasing function to a strictly
  increasing function. Suppose $p\in [u_0, v_0]$ and is strictly
  increasing. Pick any $s_1, s_2 \in [0, 1]$ with $s_1 < s_2$. Let $t^*$ and
  $k^*$ be the minimizers of $T$. To simplify notation, let $t_1 \in
  t^*(s_1)$, $t_2 \in t^*(s_2)$, $k_1 \in k^*(s_1)$, and $k_2 \in
  k^*(s_2)$. If $t_2 \leq s_1$, then we have
  \begin{align*}
    Tp(s_2) &= c(s_2 - t_2) + g(k_2) + \delta k_2 p(t_2/k_2)\\
            &> c(s_1 - t_2) + g(k_2) + \delta k_2 p(t_2/k_2)\\
            &\geq Tp(s_1).
  \end{align*}
  If $s_1 < t_2 \leq s_2$, then $t_2 + s_1 - s_2 \leq s_1$. Since $p$ is
  strictly increasing, we have
  \begin{align*}
    Tp(s_2) &= c\left(s_1 - (t_2 + s_1 - s_2)\right) + g(k_2) + \delta k_2
              p(t_2/k_2)\\ 
            &> c\left(s_1 - (t_2 + s_1 - s_2)\right) + g(k_2) + \delta k_2
              p\left((t_2 + s_1 - s_2)/k_2\right)\\
            &\geq Tp(s_1).
  \end{align*}
  
  Since $c\in [u_0, v_0]$, by Theorem \ref{main}, $T^n c \to p^*$ as $n\to
  \infty$. Furthermore, since $c$ is strictly increasing, it follows from the
  above result that $p^*$ is strictly increasing.
\end{proof}

\begin{proof}[Proof of Proposition \ref{comp}]
  If $\delta_a \leq \delta_b$, then $T_ap \leq T_b p$ for any $p\in[u_0,
  v_0]$. Since $T$ is increasing by Lemma~\ref{T}, we have $T_a^n p \leq T_b^n
  p$ for any $p\in [u_0, v_0]$ and any $n\in\N$. Then by Theorem~\ref{main},
  $p_a^* \leq p_b^*$. The same arguments applies if $g_a \leq g_b$.
\end{proof}

\section{Proof of Theorem \ref{alg_conv}}
\label{app:2}

\begin{lemma}
  \label{inc}
  The function $p_n$ is increasing for every $n$.
\end{lemma}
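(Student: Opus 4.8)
The plan is to reduce the statement to a claim about the grid values of $p_n$ and then prove that claim by induction along the grid. Write $p = p_n$ and $h = h_n$ throughout. Since Algorithm 1 produces $p$ on $[0,1]$ as the piecewise linear interpolation of the values $p(0), p(h), p(2h), \ldots, p(1)$, the function $p$ is increasing exactly when this sequence of grid values is nondecreasing, because the linear interpolant of a nondecreasing sequence is increasing. So it suffices to show $p(s-h) \le p(s)$ for every grid point $s \ge h$. For the base case $s = h$, the constraint $t \le s - h$ in~(\ref{eq:alg}) forces $t = 0$, so $p(h) = \min_{k}\{c(h) + g(k)\}$; by Assumption~\ref{g} the minimum is attained at $k=1$ with $g(1) = 0$, giving $p(h) = c(h) > 0 = p(0)$ since $c(0)=0$ and $c$ is strictly increasing by Assumption~\ref{c}.

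For the inductive step I would fix a grid point $s \ge 2h$ and assume $p$ is already increasing on $[0, s-h]$. Let $(k^*, t^*)$ be a minimizer of the problem~(\ref{eq:alg}) defining $p(s)$, so that $t^* \le s - h$ and $p(s) = c(s - t^*) + g(k^*) + \delta k^* p(t^*/k^*)$; the minimizer exists by the same reasoning as in the proof of Theorem~\ref{main} (Assumption~\ref{g} lets one restrict $k$ to a finite range, after which compactness applies). The goal is to exhibit a point feasible for the problem defining $p(s-h)$, whose feasible set now requires $t \le s - 2h$, and whose objective value is at most $p(s)$. The hard part, and the reason a case split is unavoidable, is precisely the mismatch between the two feasibility constraints: $t \le s-h$ for $p(s)$ versus the tighter $t \le s-2h$ for $p(s-h)$, so the stage-$s$ optimum need not be admissible one grid step downstream.

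In the first case, $t^* \le s - 2h$, the pair $(k^*, t^*)$ is itself feasible at stage $s-h$, whence $p(s-h) \le c(s-h-t^*) + g(k^*) + \delta k^* p(t^*/k^*) < p(s)$, the strict inequality coming from $c(s-h-t^*) < c(s-t^*)$ by strict monotonicity of $c$. In the remaining case $s - 2h < t^* \le s - h$, I would truncate the outsourcing to $t' = s - 2h$, which lies in $[0, s-2h]$ since $s \ge 2h$, while keeping $k' = k^*$; this yields $p(s-h) \le c(h) + g(k^*) + \delta k^* p((s-2h)/k^*)$. Comparing term by term with $p(s)$, the in-house cost satisfies $c(h) \le c(s-t^*)$ because $t^* \le s-h$ forces $s - t^* \ge h$, and the outsourcing cost satisfies $p((s-2h)/k^*) \le p(t^*/k^*)$ because $(s-2h)/k^* \le t^*/k^*$ with both arguments in $[0, s-h]$, where $p$ is increasing by the induction hypothesis. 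Hence $p(s-h) \le p(s)$ here too, closing the induction.

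The only genuine obstacle is this second case, in which the stage-$s$ firm outsources more than the stage-$(s-h)$ firm is permitted to; truncating to the boundary value $t' = s-2h$ and invoking the monotonicity of both $c$ and (inductively) $p$ disposes of it, and everything else is bookkeeping. I should note that the argument produces a nondecreasing sequence of grid values, which is the sense of ``increasing'' required by Lemma~\ref{inc}; the inequality is strict in the first case but only weak in the second, so establishing strict monotonicity of $p_n$ would need a separate refinement that is not needed for the uniform convergence argument to follow.
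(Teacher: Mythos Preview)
Your proof is correct and essentially identical to the paper's: both argue by induction along the grid, take the minimizer $(k^*,t^*)$ at the larger grid point, and split into the case where $t^*$ is already feasible at the smaller grid point versus the case where it must be truncated to the boundary value, invoking the monotonicity of $c$ and the inductive monotonicity of $p_n$ respectively. The only cosmetic difference is indexing—you compare $p(s-h)$ with $p(s)$ while the paper compares $p_n(s)$ with $p_n(s+h_n)$—and your remark about strict versus weak monotonicity is accurate but, as you note, not needed downstream.
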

\begin{proof}
  As $p_n$ is piecewise linear, we shall prove it by induction. Since $p_n(0)
  = 0$ and $p_n(h_n) = c(h_n)$, $p_n$ is increasing on $[0, h_n]$. Suppose it
  is increasing on $[0, s]$ for some $s = h_n, 2h_n, \ldots, 1-h_n$, then we
  have
  \begin{align*}
    p_n(s+h_n) &= \min_{t\leq s,\;k\in\N} \left\{c(s + h_n -t) + g(k) + \delta
                 k p_n(t/k) \right \}\\
               &= c(s + h_n -t^*) + g(k^*) + \delta k^* p_n(t^*/k^*)
  \end{align*}
  where $t^*$ and $k^*$ are the minimizers. If $t^* \leq s-h_n$, it follows
  from the monotonicity of $c$ that
  \begin{align*}
    p_n(s+h_n) &\geq c(s - t^*) + g(k^*) + \delta k^* p_n(t^*/k^*)\\
               &\geq \min_{t\leq s-h_n,\;k\in\N} \left\{c(s - t) + g(k) +
                 \delta k p_n(t/k) \right \}\\
               &= p_n(s).
  \end{align*}
  If $t^* \in (s-h_n, s]$, then $s + h_n - t^* \geq h_n$. Because $p_n$ is
  increasing on $[0, s]$, we have
  \begin{align*}
    p_n(s+h_n) &\geq c[s - (s - h_n)] + g(k^*) + \delta k^* p_n[(s-h_n)/k^*]\\
               &\geq \min_{t\leq s-h_n,\;k\in\N} \left\{c(s - t) + g(k) +
                 \delta k p_n(t/k) \right \}\\
               &= p_n(s),
  \end{align*}
  which completes the proof.
\end{proof}

\begin{lemma}
  \label{equicon}
  The sequence $\{p_n\}_{n=1}^\infty$ is uniformly bounded and equicontinuous.
\end{lemma}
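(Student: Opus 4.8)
The plan is to treat uniform boundedness and equicontinuity separately, leaning on the monotonicity already established in Lemma~\ref{inc} together with a bound on the increment of $p_n$ across a single grid step.

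For uniform boundedness, I would first observe that the feasible choice $k=1$, $t=0$ in the defining minimization (\ref{eq:alg}) gives, for every grid point $s>0$,
\[
p_n(s) \leq c(s) + g(1) + \delta\cdot 1\cdot p_n(0) = c(s),
\]
since $g(1)=0$ and $p_n(0)=0$. In particular $p_n(1)\leq c(1)$. Because $p_n$ is increasing by Lemma~\ref{inc} and $p_n(0)=0$, we conclude $0 \leq p_n(s) \leq p_n(1) \leq c(1)$ for all $s\in[0,1]$ and all $n$, so $\{p_n\}$ is uniformly bounded by $c(1)$.

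For equicontinuity, I would establish a uniform Lipschitz constant by bounding the increment of $p_n$ over one grid step. Fix a grid point $s$ with $s+h_n\leq 1$, and let $(t^*,k^*)$ be a minimizer in the problem defining $p_n(s)$, so that $t^*\leq s-h_n$. Since $t^*\leq s$, the same pair $(t^*,k^*)$ is feasible in the problem defining $p_n(s+h_n)$, whose constraint is $t\leq s$. Using it as a (generally suboptimal) candidate gives
\[
p_n(s+h_n) - p_n(s) \leq c(s+h_n-t^*) - c(s-t^*).
\]
Because $c$ is convex and differentiable, $c'$ is increasing, so the right-hand side is at most $c'(s+h_n-t^*)\,h_n \leq c'(1)\,h_n$, using $s+h_n-t^*\leq 1$. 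Hence the slope of $p_n$ on every linear piece is at most $c'(1)$; the base piece $[0,h_n]$ is covered by $p_n(h_n)-p_n(0)=c(h_n)\leq c'(1)h_n$, again by convexity.

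Combining this with monotonicity, every linear piece of $p_n$ has slope in $[0,c'(1)]$, so $p_n$ is Lipschitz with constant $c'(1)$ uniformly in $n$, which yields equicontinuity. I expect the increment bound in the third paragraph to be the main obstacle: the crucial idea is that the previous grid point's minimizer $(t^*,k^*)$ remains feasible one step later, so the only change between the two minimization problems is in the in-house production term $c(s-t)$, whose increment is controlled uniformly by the bound $c'(1)$ on $c'$ over $[0,1]$.
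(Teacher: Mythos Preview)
Your proof is correct and follows essentially the same approach as the paper's own proof: bound $p_n(s)\le c(s)$ via the feasible choice $(k,t)=(1,0)$, then for equicontinuity reuse the minimizer $(t^*,k^*)$ from the problem at $s$ as a feasible candidate at $s+h_n$ to get $p_n(s+h_n)-p_n(s)\le c(s+h_n-t^*)-c(s-t^*)\le c'(1)h_n$, with the base step $p_n(h_n)=c(h_n)\le c'(1)h_n$ handled separately. Your presentation is slightly cleaner in that you make the uniform Lipschitz constant $c'(1)$ explicit and justify the last inequality via convexity of $c$; there is no substantive difference from the paper.
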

\begin{proof}
  To see $\{p_n\}$ is uniformly bounded, note that for each $n$,
  \begin{align*}
    p_n(s + h_n) &= \min_{t\leq s,\;k\in\N} \left\{c(s + h_n - t) + g(k) + \delta
                   k p_n(t/k) \right \}\\
                 &\leq c(s + h_n) + g(1) + \delta p_n(0)\\
                 &= c(s + h_n) \leq c(1)
  \end{align*}
  for all $s = 0, h_n, \ldots, 1 - h_n$. 

  Due to Lemma \ref{inc}, to see $\{p_n\}$ is equicontinuous, it suffices to
  show that there exists $K > 0$ such that $p_n(s + h_n) - p_n(s) \leq Kh_n$
  for all $n\in \N$ and all $s = 0, h_n, 2h_n, \ldots, 1 - h_n$. Fix such $n$
  and $s$. If $s = 0$, $p_n(h_n) - p_n(0) = c(h_n) \leq c'(1) h_n$. If $s \geq
  h_n$, denote the minimizers in the definition of $p_n(s)$ by $t^*$ and
  $k^*$, i.e.,
  \begin{align*}
    p_n(s) &= \min_{t\leq s - h_n,\;k\in\N} \left\{c(s - t) + g(k) + \delta k
           p_n(t/k) \right \}\\
         &= c(s - t^*) + g(k^*) + \delta k^* p_n(t^*/k^*).
  \end{align*}
  Since $t^* \leq s$, it follows that
  \begin{align*}
    p_n(s + h_n) &= \min_{t\leq s,\;k\in\N} \left\{c(s + h_n -t) + g(k) + \delta
                   k p_n(t/k) \right \}\\
                 &\leq c(s + h_n - t^*) + g(k^*) + \delta k^* p_n(t^*/k^*).
  \end{align*}
  Hence,
  \begin{align*}
    p_n(s + h_n) - p_n(s) &\leq c(s + h_n - t^*) + c(s - t^*)\\
                          &\leq c'(1) h_n,
  \end{align*}
  which completes the proof.
\end{proof}

\begin{lemma}
  \label{conv}
  There exists a uniformly convergent subsequence of $\{p_n\}$. Furthermore,
  every uniformly convergent subsequence of $\{p_n\}$ converges to a fixed
  point of $T$.
\end{lemma}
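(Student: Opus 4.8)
The plan is to split the claim into its two assertions: first extract a uniformly convergent subsequence by compactness, then show that the limit of any such subsequence solves $Tp=p$. For the first assertion I would invoke the Arzel\`a--Ascoli theorem. By Lemma~\ref{equicon} the family $\{p_n\}$ is uniformly bounded and equicontinuous on the compact interval $X=[0,1]$, so Arzel\`a--Ascoli delivers a subsequence converging uniformly to some $p$. As a uniform limit of continuous (indeed increasing, by Lemma~\ref{inc}) functions, $p$ is continuous, so $p\in C(X)$ and $Tp$ is well defined by Lemma~\ref{berge}.

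For the second assertion, let $\{p_{n_j}\}$ converge uniformly to $p$ and aim to prove $Tp\le p$ and $Tp\ge p$ separately. Two preliminary remarks set up the limiting arguments. First, since $k$ may be restricted to a common finite range $\{1,\dots,\bar k\}$ as in Lemma~\ref{berge}, the elementary bound $|\min_a F(a)-\min_a G(a)|\le \sup_a|F(a)-G(a)|$ gives $\|Tp-Tq\|_\infty\le \delta\bar k\,\|p-q\|_\infty$; hence $T$ is continuous in the supremum norm and $Tp_{n_j}\to Tp$ uniformly. Second, for each $s\in X$ I fix grid points $s_j\in G_{n_j}$ with $s_j\to s$, which is possible because $h_{n_j}=2^{-n_j}\to 0$.

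The inequality $Tp\le p$ is the easy direction. Because the recursion (\ref{eq:alg}) minimizes $f_{p_{n_j}}$ over the smaller constraint set $\{t\le s-h_{n_j}\}\subseteq\{t\le s\}$, one has $p_{n_j}(s_j)\ge Tp_{n_j}(s_j)$ at every grid point. Letting $j\to\infty$ and using $p_{n_j}(s_j)\to p(s)$ together with $Tp_{n_j}(s_j)\to Tp(s)$ (both from uniform convergence and continuity of the limits) yields $p(s)\ge Tp(s)$ for every $s$.

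The reverse inequality $p\le Tp$ is where I expect the real difficulty, precisely because of the mismatch between the algorithm's constraint $t\le s-h_{n_j}$ and the constraint $t\le s$ defining $T$. The device that resolves it is a truncation keeping the test point feasible: fix any $t\le s$ and any $k\le\bar k$ and set $t_j':=\min\{t,\,s_j-h_{n_j}\}$, which is feasible in (\ref{eq:alg}) at $s_j$ and satisfies $t_j'\to t$ since $s_j-h_{n_j}\to s\ge t$ (this also handles the boundary case $t=s$, where no exactly-equal feasible point exists). Using $(k,t_j')$ in the minimization defining $p_{n_j}(s_j)$ gives
\[
  p_{n_j}(s_j)\le c(s_j-t_j')+g(k)+\delta k\,p_{n_j}(t_j'/k),
\]
and passing to the limit---via continuity of $c$, uniform convergence of $p_{n_j}$, and $t_j'/k\to t/k$---produces $p(s)\le c(s-t)+g(k)+\delta k\,p(t/k)$. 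Since $t\le s$ and $k$ were arbitrary, minimizing over $\Theta(s)$ gives $p(s)\le Tp(s)$. Combining the two inequalities yields $Tp=p$, so $p$ is a fixed point of $T$, as required.
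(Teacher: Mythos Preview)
Your argument is correct. The first assertion is handled identically to the paper (Arzel\`a--Ascoli via Lemma~\ref{equicon}), but for the second you take a somewhat different route. The paper restricts attention to dyadic rationals $s\in\bigcup_n G_n$, shows directly via $\epsilon$-arguments that $\lim_n p_n(s)=T\bar p(s)$ at each such $s$, and then extends by density and continuity of $\bar p$ and $T\bar p$. You instead (i) observe that $T$ is Lipschitz in the sup norm with constant $\delta\bar k$, so that $Tp_{n_j}\to Tp$ uniformly in one stroke, and (ii) work at an arbitrary $s\in X$ by choosing nearby grid points $s_j\in G_{n_j}$, using constraint-set monotonicity for the inequality $Tp\le p$ and the feasible truncation $t_j'=\min\{t,s_j-h_{n_j}\}$ for $p\le Tp$. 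The paper's approach stays anchored at fixed dyadic $s$ (convenient since eventually $s\in G_n$ exactly), whereas your operator-level Lipschitz observation and truncation device make the limit passage cleaner and dispense with the separate density step. One small point worth making explicit in your write-up: the bound $\bar k$ on admissible $k$ is uniform across all $p_{n_j}$ and the limit $p$ because these functions are all nonnegative and bounded above by $c(1)$ (Lemma~\ref{equicon}), which is what legitimizes the uniform Lipschitz constant; and the edge case $s=0$ (where $s_j-h_{n_j}$ could fail to be nonnegative) is harmless since $p(0)=Tp(0)=0$.
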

\begin{proof}
  Lemma \ref{equicon} and the Arzel\`a-Ascoli theorem imply that ${p_n}$ has
  a uniformly convergent subsequence. To simplify notation, let $\{p_n\}$ be
  such a subsequence and converge uniformly to $\bar{p}$. Because $p_n$ are
  continuous, $\bar{p}$ is continuous. By Berge's theorem,
  \begin{equation*}
    T\bar{p}(s) = \min_{t\leq s,\;k\in\N} \left\{ c(s-t) + g(k) + \delta k
      \bar{p}(t/k) \right\}
  \end{equation*}
  is also continuous. To see $\bar{p}$ is a fixed point of $T$, it is
  sufficient to show that $\bar{p}$ and $T\bar{p}$ agree on the dyadic
  rationals $\cup_nG_n$, i.e.,
  \begin{equation*}
    \lim_{n\to\infty} \min_{\substack{t\leq s - h_n\\ k\in\N}} \left\{c(s - t)
      + g(k) + \delta k p_n(t/k) \right \} = \min_{t\leq s,\;k\in\N} \left\{
      c(s-t) + g(k) + \delta k \bar{p}(t/k) \right\} 
  \end{equation*}
  for every $s \in \cup_nG_n$.

  Fix $\epsilon > 0$. Since $p_n \to \bar{p}$ uniformly, there exists $N_1 \in
  \N$ such that $n > N_1$ implies that
  \begin{equation*}
    p_n(x) > \bar{p}(x) - \epsilon/(\delta\bar{k})
  \end{equation*}
  for all $x\in [0, 1]$ where $\bar{k}$ is the upper bound on the possible
  values of $k$. It follows that for $n > N_1$ we have
  \begin{align*}
    \min_{\substack{t\leq s - h_n\\ k\in\N}} \left\{c(s - t) + g(k) + \delta k
    p_n(t/k) \right \}
    &> \min_{\substack{t\leq s - h_n\\ k\in\N}} \left\{c(s - t) + g(k) +
    \delta k \bar{p}(t/k) \right \} - \epsilon\\
    &\geq \min_{\substack{t\leq s\\ k\in\N}} \left\{c(s - t) + g(k) +
    \delta k \bar{p}(t/k) \right \} - \epsilon.
  \end{align*}
  Therefore,
  \begin{equation*}
    \lim_{n\to\infty} \min_{\substack{t\leq s - h_n\\ k\in\N}} \left\{c(s - t)
      + g(k) + \delta k p_n(t/k) \right \} \geq \min_{t\leq s,\;k\in\N} \left\{
      c(s-t) + g(k) + \delta k \bar{p}(t/k) \right\}.
  \end{equation*}

  For the other direction, there exists $N_2\in \N$ such that $n > N_2$
  implies that
  \begin{equation*}
    p_n(x) < \bar{p}(x) + \epsilon/(2\delta\bar{k})
  \end{equation*}
  for all $x\in[0, 1]$. Then for $n > N_2$ we have
  \begin{equation*}
    \min_{\substack{t\leq s - h_n\\ k\in\N}} \left\{c(s - t) + g(k) + \delta k
      p_n(t/k) \right \} < \min_{\substack{t\leq s - h_n\\ k\in\N}} \left\{c(s
      - t) + g(k) + \delta k \bar{p}(t/k) \right \} + \epsilon/2.
  \end{equation*}
  Since $c, g, \bar{p}$ are continuous and $h_n \to 0$, we can choose $N_3$
  such that $n > N_3$ implies that
  \begin{equation*}
    \min_{\substack{t\leq s - h_n\\ k\in\N}} \left\{c(s - t) + g(k) + \delta k
      \bar{p}(t/k) \right \} < \min_{\substack{t\leq s\\ k\in\N}}
    \left\{c(s - t) + g(k) + \delta k \bar{p}(t/k) \right \} + \epsilon/2.
  \end{equation*}
  Hence, for $n > \max\{N_2, N_3\}$ we have 
  \begin{equation*}
    \min_{\substack{t\leq s - h_n\\ k\in\N}} \left\{c(s - t) + g(k) + \delta k
      p_n(t/k) \right \} < \min_{\substack{t\leq s\\ k\in\N}} \left\{c(s -
      t) + g(k) + \delta k \bar{p}(t/k) \right \} + \epsilon. 
  \end{equation*}
  This implies
  \begin{equation*}
    \lim_{n\to\infty} \min_{\substack{t\leq s - h_n\\ k\in\N}} \left\{c(s - t)
      + g(k) + \delta k p_n(t/k) \right \} \leq \min_{t\leq s,\;k\in\N} \left\{
      c(s-t) + g(k) + \delta k \bar{p}(t/k) \right\}.
  \end{equation*}
  Therefore, $\bar{p} = T\bar{p}$.
\end{proof}

\begin{lemma}
  \label{pstar}
  Every uniformly convergent subsequence of $\{p_n\}$ converges to $p^*$.
\end{lemma}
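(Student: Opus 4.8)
The plan is to combine Lemma~\ref{conv}, which guarantees that the limit of any uniformly convergent subsequence is a fixed point of $T$, with the uniqueness part of Theorem~\ref{main}, which asserts that $T$ has only one fixed point inside the order interval $[u_0, v_0]$. The only gap to close is therefore to verify that the limit of such a subsequence actually lies in $[u_0, v_0]$; once this is established, uniqueness forces the limit to equal $p^*$, and since the subsequence was arbitrary the lemma follows.

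Let $\{p_{n_j}\}$ be a uniformly convergent subsequence with limit $\bar p$; by Lemma~\ref{conv}, $\bar p$ is a continuous fixed point of $T$. To locate $\bar p$ in $[u_0, v_0]$ I would first show that every $p_n$ satisfies $u_0 \le p_n \le v_0$ in the appropriate sense. For the upper bound, choosing $t = 0$ and $k = 1$ in the minimization defining $p_n$ at a grid point $s$ gives $p_n(s) \le c(s) + g(1) + \delta p_n(0) = c(s)$, exactly as in the proof of Lemma~\ref{equicon}. For the lower bound I would argue by induction on the grid, starting from $p_n(0) = 0 = u_0(0)$: assuming $p_n(x) \ge c'(0)x$ for all $x \le s - h_n$ and letting $t^*, k^*$ denote the minimizers at $s$, note that $t^*/k^* \le t^* \le s - h_n$, so the inductive hypothesis applies and yields $\delta k^* p_n(t^*/k^*) \ge \delta c'(0) t^*$; meanwhile convexity of $c$ with $c(0) = 0$ gives $c(s - t^*) \ge c'(0)(s - t^*)$ and $g \ge 0$. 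Since $\delta > 1$ and $t^* \ge 0$, these combine to $p_n(s) \ge c'(0)(s - t^*) + \delta c'(0) t^* \ge c'(0)s = u_0(s)$, and extending by piecewise-linear interpolation preserves the bound on $[0, s]$.

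Because $u_0$ is linear and each $p_n$ is the piecewise-linear interpolant of its grid values, the grid-point inequality $p_n \ge u_0$ extends to all of $X$, so $\bar p = \lim_j p_{n_j} \ge u_0$. For the upper bound I would pass to the limit through the grid so as to sidestep any between-grid subtlety: given any $s \in X$, choose grid points $s_{n_j} \in G_{n_j}$ with $s_{n_j} \to s$; uniform convergence of $p_{n_j}$ together with continuity of $\bar p$ gives $p_{n_j}(s_{n_j}) \to \bar p(s)$, while $p_{n_j}(s_{n_j}) \le c(s_{n_j}) \to c(s)$, so $\bar p(s) \le c(s) = v_0(s)$. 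Hence $\bar p \in [u_0, v_0]$, and being a fixed point of $T$ it must coincide with $p^*$ by Theorem~\ref{main}.

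I expect the lower-bound induction to be the main technical step, as it is the only place where the structure of the problem, namely the convexity of $c$ and the condition $\delta > 1$, is genuinely used; the upper bound and the limiting arguments are routine once the interpolation and the uniform-convergence bookkeeping are set up carefully. Everything else reduces to quoting Lemma~\ref{conv} and the uniqueness in Theorem~\ref{main}.
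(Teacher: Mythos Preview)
Your proposal is correct and follows essentially the same route as the paper: invoke Lemma~\ref{conv} so that any subsequential limit $\bar p$ is a fixed point of $T$, then place $\bar p$ in $[u_0,v_0]$ by showing $c'(0)s\le p_n(s)\le c(s)$ on grid points (the lower bound via the same induction using $c(x)\ge c'(0)x$, $g\ge0$, and $\delta>1$; the upper bound via $t=0$, $k=1$) and pass to the limit using density and continuity. The only cosmetic difference is that the paper phrases the limiting step as ``prove the bounds on $\cup_n G_n$ and use continuity,'' whereas you spell out the lower bound via linearity of $u_0$ and the upper bound via grid sequences $s_{n_j}\to s$; these are equivalent.
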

\begin{proof}
  Let $\{p_n\}$ be the subsequence that converges uniformly to $\bar{p}$. By
  Theorem \ref{main}, to see $\bar{p} = p^*$, it suffices to show that
  $\bar{p}$ is continuous and $c'(0)x \leq \bar{p}(x) \leq c(x)$ for all
  $x\in[0, 1]$. Continuity is satisfied by the fact that each $p_n$ is
  continuous and $p_n\to \bar{p}$ uniformly. To show the second one, we again
  prove this holds on $\cup_nG_n$, and it is sufficient to show that $c'(0)s
  \leq p_n(s) \leq c(s)$ for all $s \in G_n$ and all $n\in\N$. It is apparent
  that $p_n(s) \leq c(s)$ (choose $t = 0$ and $k = 1$). We show $p_n(s) \geq
  c'(0)s$ by induction. Suppose $p_n(x) \geq c'(0)x$ for all $x \leq s$. Then
  we have
  \begin{align*}
    p_n(s+h_n) &= \min_{t\leq s,\;k\in\N} \left\{c(s+h_n-t) + g(k) + \delta k
                 p_n(t/k) \right \}\\
               &\geq \min_{t\leq s,\;k\in\N} \left\{c'(0)(s+h_n-t) + g(k) +
                 \delta c'(0) t \right \}\\
               &=  \min_{t\leq s} \left\{c'(0)(s+h_n-t + \delta t)
                 \right \}\\
               &= c'(0)(s + h_n).
  \end{align*}
  Since $p_n(0) = 0 \geq c'(0)\cdot 0$, it follows that $p_n(s) \geq
  c'(0)s$. This concludes the proof.
\end{proof}

\section{Proof of Theorem~\ref{main:st}}
\label{app:3}

Similar to \ref{app:1}, we can write the operator $\tilde{T}$ in
(\ref{eq:T2}) as
\begin{equation*}
  \tilde{T}p(s) = \min_{(\lambda, t)\in \tilde{\Theta}(s)}\left\{ c(s - t) +
    \E_k^\lambda \left[g(k) + \delta k p(t/k) \right] \right\}
\end{equation*}
where $\tilde{\Theta}(s) = [0, \infty) \times [0, s]$. Upon close inspection,
all of the above lemmas still hold for $\tilde{T}$ if we can restrict
$\tilde{\Theta}(s)$ to be a compact set. To be more specific, Lemma~\ref{T}
and \ref{inc} can be proved in the exact same way; Lemma~\ref{inc_con},
\ref{v}, \ref{equicon}, and \ref{pstar} hold since each firm can choose $k=1$
with probability 1; Lemma~\ref{conv} and \ref{berge} need the compactness of
$\tilde{\Theta}(s)$. To avoid redundancy, we omit the proofs and shall only
show that there exists an upper bound on the choice set of $\lambda$.

Let $\nu$ be the median of the Poisson distribution and denote the ceiling
of $\nu$ (i.e., the least integer greater than or equal to $\nu$) by
$\bar{\nu}$. Then we have
\begin{equation*}
  \sum_{k = \bar{\nu}}^\infty f(k; \lambda) \geq \frac{1}{2}
\end{equation*}
by definition. It follows that the expectation of $g(k)$
\begin{align*}
  \E^\lambda_k g(k) &= \sum_{k=1}^\infty g(k) f(k; \lambda)\\
                    &\geq \sum_{k=\bar{\nu}}^\infty g(k) f(k; \lambda)\\
                    &\geq g(\bar{\nu})\sum_{k=\bar{\nu}}^\infty f(k;
                      \lambda)\\
                    &\geq \frac{1}{2} g(\bar{\nu})
\end{align*}
where the second inequality follows from
Assumption~\ref{g}. \cite{choi1994medians} gives bounds\footnote{Since in
  our model $k$ starts from 1, we write $\nu - 1$ in the inequality.} for the
median of the Poisson distribution:
\begin{equation*}
  \lambda - \ln 2 \leq \nu - 1 < \lambda + \frac{1}{3}.
\end{equation*}
So we have
\begin{equation*}
  \E^\lambda_k g(k) \geq \frac{1}{2}g(\bar{\nu}) \geq \frac{1}{2}g(\nu) \geq
  \frac{1}{2} g(\lambda - \ln 2 + 1).
\end{equation*}
Therefore, we can find $\bar{\lambda}$ such that $\E^\lambda_k g(k) \geq c(1)$
for all $\lambda \geq \bar{\lambda}$ and hence $\Theta(s)$ is essentially $[0,
\bar{\lambda}] \times [0, s]$ which is a compact set.
\bibliography{bib}

\begin{thebibliography}{48}
\expandafter\ifx\csname natexlab\endcsname\relax\def\natexlab#1{#1}\fi
\providecommand{\url}[1]{\texttt{#1}}
\providecommand{\href}[2]{#2}
\providecommand{\path}[1]{#1}
\providecommand{\DOIprefix}{doi:}
\providecommand{\ArXivprefix}{arXiv:}
\providecommand{\URLprefix}{URL: }
\providecommand{\Pubmedprefix}{pmid:}
\providecommand{\doi}[1]{\href{http://dx.doi.org/#1}{\path{#1}}}
\providecommand{\Pubmed}[1]{\href{pmid:#1}{\path{#1}}}
\providecommand{\bibinfo}[2]{#2}
\ifx\xfnm\relax \def\xfnm[#1]{\unskip,\space#1}\fi
\bibitem[{Acemoglu et~al.(2012)Acemoglu, Carvalho, Ozdaglar and
  Tahbaz-Salehi}]{acemoglu2012network}
\bibinfo{author}{Acemoglu, D.}, \bibinfo{author}{Carvalho, V.M.},
  \bibinfo{author}{Ozdaglar, A.}, \bibinfo{author}{Tahbaz-Salehi, A.},
  \bibinfo{year}{2012}.
\newblock \bibinfo{title}{The network origins of aggregate fluctuations}.
\newblock \bibinfo{journal}{Econometrica} \bibinfo{volume}{80},
  \bibinfo{pages}{1977--2016}.
\bibitem[{Acemoglu et~al.(2015a)Acemoglu, Ozdaglar and
  Tahbaz-Salehi}]{acemoglu2015networks}
\bibinfo{author}{Acemoglu, D.}, \bibinfo{author}{Ozdaglar, A.},
  \bibinfo{author}{Tahbaz-Salehi, A.}, \bibinfo{year}{2015}a.
\newblock \bibinfo{title}{Networks, shocks, and systemic risk}.
\newblock \bibinfo{type}{Technical Report}. National Bureau of Economic
  Research.
\bibitem[{Acemoglu et~al.(2015b)Acemoglu, Ozdaglar and
  Tahbaz-Salehi}]{acemoglu2015systemic}
\bibinfo{author}{Acemoglu, D.}, \bibinfo{author}{Ozdaglar, A.},
  \bibinfo{author}{Tahbaz-Salehi, A.}, \bibinfo{year}{2015}b.
\newblock \bibinfo{title}{Systemic risk and stability in financial networks}.
\newblock \bibinfo{journal}{American Economic Review} \bibinfo{volume}{105},
  \bibinfo{pages}{564--608}.
\bibitem[{Aliprantis and Border(2006)}]{aliprantis2006infinite}
\bibinfo{author}{Aliprantis, C.D.}, \bibinfo{author}{Border, K.C.},
  \bibinfo{year}{2006}.
\newblock \bibinfo{title}{Infinite Dimensional Analysis: A Hitchhiker's Guide}.
\newblock \bibinfo{publisher}{Springer}.
\bibitem[{Balbus(2016)}]{balbus2016non}
\bibinfo{author}{Balbus, L.}, \bibinfo{year}{2016}.
\newblock \bibinfo{title}{On non-negative recursive utilities in dynamic
  programming with nonlinear aggregator and ces}.
\newblock \bibinfo{journal}{University of Zielora G{\'o}ra Working Paper} .
\bibitem[{Balbus et~al.(2013)Balbus, Reffett and
  Wo{\'z}ny}]{balbus2013constructive}
\bibinfo{author}{Balbus, {\L}.}, \bibinfo{author}{Reffett, K.},
  \bibinfo{author}{Wo{\'z}ny, {\L}.}, \bibinfo{year}{2013}.
\newblock \bibinfo{title}{A constructive geometrical approach to the uniqueness
  of markov stationary equilibrium in stochastic games of intergenerational
  altruism}.
\newblock \bibinfo{journal}{Journal of Economic Dynamics and Control}
  \bibinfo{volume}{37}, \bibinfo{pages}{1019--1039}.
\bibitem[{Baldwin and Venables(2013)}]{baldwin2013spiders}
\bibinfo{author}{Baldwin, R.}, \bibinfo{author}{Venables, A.J.},
  \bibinfo{year}{2013}.
\newblock \bibinfo{title}{Spiders and snakes: offshoring and agglomeration in
  the global economy}.
\newblock \bibinfo{journal}{Journal of International Economics}
  \bibinfo{volume}{90}, \bibinfo{pages}{245--254}.
\bibitem[{Becker and Murphy(1992)}]{becker1992division}
\bibinfo{author}{Becker, G.S.}, \bibinfo{author}{Murphy, K.M.},
  \bibinfo{year}{1992}.
\newblock \bibinfo{title}{The division of labor, coordination costs, and
  knowledge}.
\newblock \bibinfo{journal}{The Quarterly Journal of Economics}
  \bibinfo{volume}{107}, \bibinfo{pages}{1137--1160}.
\bibitem[{Becker and Rinc{\'o}n-Zapatero(2017)}]{becker2017recursive}
\bibinfo{author}{Becker, R.A.}, \bibinfo{author}{Rinc{\'o}n-Zapatero, J.P.},
  \bibinfo{year}{2017}.
\newblock \bibinfo{title}{Recursive utiity and thompson aggregators} .
\bibitem[{Bellman(1957)}]{bellman1957dynamic}
\bibinfo{author}{Bellman, R.E.}, \bibinfo{year}{1957}.
\newblock \bibinfo{title}{Dynamic Programming}.
\newblock \bibinfo{publisher}{Princeton University Press}.
\bibitem[{Bessaga(1959)}]{bessaga1959converse}
\bibinfo{author}{Bessaga, C.}, \bibinfo{year}{1959}.
\newblock \bibinfo{title}{On the converse of banach “fixed-point
  principle”}.
\newblock \bibinfo{journal}{Colloquium Mathematicum} \bibinfo{volume}{7},
  \bibinfo{pages}{41–43}.
\newblock \URLprefix \url{http://dx.doi.org/10.4064/cm-7-1-41-43},
  \DOIprefix\doi{10.4064/cm-7-1-41-43}.
\bibitem[{Bigio and La’O(2016)}]{bigio2016financial}
\bibinfo{author}{Bigio, S.}, \bibinfo{author}{La’O, J.},
  \bibinfo{year}{2016}.
\newblock \bibinfo{title}{Financial frictions in production networks}.
\newblock \bibinfo{type}{Technical Report}. National Bureau of Economic
  Research.
\bibitem[{Bloise and Vailakis(2018)}]{bloise2018convex}
\bibinfo{author}{Bloise, G.}, \bibinfo{author}{Vailakis, Y.},
  \bibinfo{year}{2018}.
\newblock \bibinfo{title}{Convex dynamic programming with (bounded) recursive
  utility}.
\newblock \bibinfo{journal}{Journal of Economic Theory} \bibinfo{volume}{173},
  \bibinfo{pages}{118--141}.
\bibitem[{Borovi{\v{c}}ka and Stachurski(2017)}]{borovivcka2017necessary}
\bibinfo{author}{Borovi{\v{c}}ka, J.}, \bibinfo{author}{Stachurski, J.},
  \bibinfo{year}{2017}.
\newblock \bibinfo{title}{Necessary and sufficient conditions for existence and
  uniqueness of recursive utilities}.
\newblock \bibinfo{type}{Technical Report}. National Bureau of Economic
  Research.
\bibitem[{Borovi{\v{c}}ka and Stachurski(2018)}]{borovivcka2018existence}
\bibinfo{author}{Borovi{\v{c}}ka, J.}, \bibinfo{author}{Stachurski, J.},
  \bibinfo{year}{2018}.
\newblock \bibinfo{title}{Existence and uniqueness of equilibrium asset prices
  over infinite horizons}.
\newblock \bibinfo{type}{Technical Report}.
\bibitem[{Carvalho(2007)}]{carvalho2007aggregate}
\bibinfo{author}{Carvalho, V.}, \bibinfo{year}{2007}.
\newblock \bibinfo{title}{Aggregate fluctuations and the network structure of
  intersectoral trade} .
\bibitem[{Cheney(2013)}]{cheney2013analysis}
\bibinfo{author}{Cheney, W.}, \bibinfo{year}{2013}.
\newblock \bibinfo{title}{Analysis for applied mathematics}. volume
  \bibinfo{volume}{208}.
\newblock \bibinfo{publisher}{Springer Science \& Business Media}.
\bibitem[{Choi(1994)}]{choi1994medians}
\bibinfo{author}{Choi, K.P.}, \bibinfo{year}{1994}.
\newblock \bibinfo{title}{On the medians of gamma distributions and an equation
  of ramanujan}.
\newblock \bibinfo{journal}{Proceedings of the American Mathematical Society}
  \bibinfo{volume}{121}, \bibinfo{pages}{245--251}.
\bibitem[{Ciccone(2002)}]{ciccone2002input}
\bibinfo{author}{Ciccone, A.}, \bibinfo{year}{2002}.
\newblock \bibinfo{title}{Input chains and industrialization}.
\newblock \bibinfo{journal}{The Review of Economic Studies}
  \bibinfo{volume}{69}, \bibinfo{pages}{565--587}.
\bibitem[{Coase(1937)}]{coase1937nature}
\bibinfo{author}{Coase, R.H.}, \bibinfo{year}{1937}.
\newblock \bibinfo{title}{The nature of the firm}.
\newblock \bibinfo{journal}{Economica} \bibinfo{volume}{4},
  \bibinfo{pages}{386--405}.
\bibitem[{Coleman(1991)}]{coleman1991equilibrium}
\bibinfo{author}{Coleman, W.J.}, \bibinfo{year}{1991}.
\newblock \bibinfo{title}{Equilibrium in a production economy with an income
  tax}.
\newblock \bibinfo{journal}{Econometrica: Journal of the Econometric Society} ,
  \bibinfo{pages}{1091--1104}.
\bibitem[{Coleman(2000)}]{coleman2000uniqueness}
\bibinfo{author}{Coleman, W.J.}, \bibinfo{year}{2000}.
\newblock \bibinfo{title}{Uniqueness of an equilibrium in infinite-horizon
  economies subject to taxes and externalities}.
\newblock \bibinfo{journal}{Journal of Economic Theory} \bibinfo{volume}{95},
  \bibinfo{pages}{71--78}.
\bibitem[{Datta et~al.(2002a)Datta, Mirman, Morand and
  Reffett}]{datta2002monotone}
\bibinfo{author}{Datta, M.}, \bibinfo{author}{Mirman, L.J.},
  \bibinfo{author}{Morand, O.F.}, \bibinfo{author}{Reffett, K.L.},
  \bibinfo{year}{2002}a.
\newblock \bibinfo{title}{Monotone methods for markovian equilibrium in dynamic
  economies}.
\newblock \bibinfo{journal}{Annals of Operations Research}
  \bibinfo{volume}{114}, \bibinfo{pages}{117--144}.
\bibitem[{Datta et~al.(2002b)Datta, Mirman and Reffett}]{datta2002existence}
\bibinfo{author}{Datta, M.}, \bibinfo{author}{Mirman, L.J.},
  \bibinfo{author}{Reffett, K.L.}, \bibinfo{year}{2002}b.
\newblock \bibinfo{title}{Existence and uniqueness of equilibrium in distorted
  dynamic economies with capital and labor}.
\newblock \bibinfo{journal}{Journal of Economic Theory} \bibinfo{volume}{103},
  \bibinfo{pages}{377--410}.
\bibitem[{Dedrick et~al.(2011)Dedrick, Kraemer and
  Linden}]{dedrick2011distribution}
\bibinfo{author}{Dedrick, J.}, \bibinfo{author}{Kraemer, K.L.},
  \bibinfo{author}{Linden, G.}, \bibinfo{year}{2011}.
\newblock \bibinfo{title}{The distribution of value in the mobile phone supply
  chain}.
\newblock \bibinfo{journal}{Telecommunications Policy} \bibinfo{volume}{35},
  \bibinfo{pages}{505--521}.
\bibitem[{Du(1989)}]{du1989fixed}
\bibinfo{author}{Du, Y.}, \bibinfo{year}{1989}.
\newblock \bibinfo{title}{Fixed points of a class of non-compact operators and
  applications}.
\newblock \bibinfo{journal}{Acta Mathematica Sinica} \bibinfo{volume}{32},
  \bibinfo{pages}{618--627}.
\bibitem[{Guo et~al.(2004)Guo, Cho and Zhu}]{guo2004partial}
\bibinfo{author}{Guo, D.}, \bibinfo{author}{Cho, Y.J.}, \bibinfo{author}{Zhu,
  J.}, \bibinfo{year}{2004}.
\newblock \bibinfo{title}{Partial ordering methods in nonlinear problems}.
\newblock \bibinfo{publisher}{Nova Publishers}.
\bibitem[{Guo and Lakshmikantham(1988)}]{guo1988nonlinear}
\bibinfo{author}{Guo, D.}, \bibinfo{author}{Lakshmikantham, V.},
  \bibinfo{year}{1988}.
\newblock \bibinfo{title}{Nonlinear problems in abstract cones}.
\newblock \bibinfo{publisher}{Academic Press}.
\newblock \DOIprefix\doi{https://doi.org/10.1016/C2013-0-10750-7}.
\bibitem[{Janos(1967)}]{janos1967converse}
\bibinfo{author}{Janos, L.}, \bibinfo{year}{1967}.
\newblock \bibinfo{title}{A converse of banach's contraction theorem}.
\newblock \bibinfo{journal}{Proceedings of the American Mathematical Society}
  \bibinfo{volume}{18}, \bibinfo{pages}{287--289}.
\bibitem[{Jones(2011)}]{jones2011intermediate}
\bibinfo{author}{Jones, C.I.}, \bibinfo{year}{2011}.
\newblock \bibinfo{title}{Intermediate goods and weak links in the theory of
  economic development}.
\newblock \bibinfo{journal}{American Economic Journal: Macroeconomics}
  \bibinfo{volume}{3}, \bibinfo{pages}{1--28}.
\bibitem[{Kikuchi et~al.(2018)Kikuchi, Nishimura and
  Stachurski}]{kikuchi2018span}
\bibinfo{author}{Kikuchi, T.}, \bibinfo{author}{Nishimura, K.},
  \bibinfo{author}{Stachurski, J.}, \bibinfo{year}{2018}.
\newblock \bibinfo{title}{Span of control, transaction costs, and the structure
  of production chains}.
\newblock \bibinfo{journal}{Theoretical Economics} \bibinfo{volume}{13},
  \bibinfo{pages}{729--760}.
\bibitem[{Kraemer et~al.(2011)Kraemer, Linden and
  Dedrick}]{kraemer2011capturing}
\bibinfo{author}{Kraemer, K.L.}, \bibinfo{author}{Linden, G.},
  \bibinfo{author}{Dedrick, J.}, \bibinfo{year}{2011}.
\newblock \bibinfo{title}{Capturing value in global networks: Apple’s ipad
  and iphone}.
\newblock \bibinfo{journal}{University of California, Irvine, University of
  California, Berkeley, y Syracuse University, NY. http://pcic. merage. uci.
  edu/papers/2011/value\_iPad\_iPhone. pdf. Consultado el}
  \bibinfo{volume}{15}.
\bibitem[{Krasnosel'skii(1964)}]{krasnoselskii1964positive}
\bibinfo{author}{Krasnosel'skii}, \bibinfo{year}{1964}.
\newblock \bibinfo{title}{Positive Solutions of Operator Equations}.
\bibitem[{Krasnosel'skii and Zabre\u{i}ko(1984)}]{krasnosel1984geometrical}
\bibinfo{author}{Krasnosel'skii, M.}, \bibinfo{author}{Zabre\u{i}ko, P.},
  \bibinfo{year}{1984}.
\newblock \bibinfo{title}{Geometrical methods of nonlinear analysis}.
\newblock Grundlehren der mathematischen Wissenschaften,
  \bibinfo{publisher}{Springer-Verlag}.
\newblock \URLprefix \url{https://books.google.com.au/books?id=8Q2oAAAAIAAJ}.
\bibitem[{Krasnosel’skii et~al.(1972)Krasnosel’skii, Vainikko, Zabreiko,
  Rutitskii and Stetsenko}]{krasnoselskii1972approximate}
\bibinfo{author}{Krasnosel’skii, M.A.}, \bibinfo{author}{Vainikko, G.M.},
  \bibinfo{author}{Zabreiko, P.P.}, \bibinfo{author}{Rutitskii, Y.B.},
  \bibinfo{author}{Stetsenko, V.Y.}, \bibinfo{year}{1972}.
\newblock \bibinfo{title}{Approximate Solution of Operator Equations}.
\newblock \bibinfo{publisher}{Springer Netherlands}.
\newblock \URLprefix \url{http://dx.doi.org/10.1007/978-94-010-2715-1},
  \DOIprefix\doi{10.1007/978-94-010-2715-1}.
\bibitem[{Lacker and Schreft(1991)}]{lacker1991money}
\bibinfo{author}{Lacker, J.M.}, \bibinfo{author}{Schreft, S.},
  \bibinfo{year}{1991}.
\newblock \bibinfo{title}{Money, trade credit and asset prices} .
\bibitem[{Leader(1982)}]{leader1982uniformly}
\bibinfo{author}{Leader, S.}, \bibinfo{year}{1982}.
\newblock \bibinfo{title}{Uniformly contractive fixed points in compact metric
  spaces}.
\newblock \bibinfo{journal}{Proceedings of the American Mathematical Society}
  \bibinfo{volume}{86}, \bibinfo{pages}{153--158}.
\bibitem[{Levine(2012)}]{levine2012production}
\bibinfo{author}{Levine, D.K.}, \bibinfo{year}{2012}.
\newblock \bibinfo{title}{Production chains}.
\newblock \bibinfo{journal}{Review of Economic Dynamics} \bibinfo{volume}{15},
  \bibinfo{pages}{271--282}.
\bibitem[{Lucas(1978)}]{lucas1978size}
\bibinfo{author}{Lucas, R.E.}, \bibinfo{year}{1978}.
\newblock \bibinfo{title}{On the size distribution of business firms}.
\newblock \bibinfo{journal}{The Bell Journal of Economics} ,
  \bibinfo{pages}{508--523}.
\bibitem[{Marinacci and Montrucchio(2010)}]{marinacci2010unique}
\bibinfo{author}{Marinacci, M.}, \bibinfo{author}{Montrucchio, L.},
  \bibinfo{year}{2010}.
\newblock \bibinfo{title}{Unique solutions for stochastic recursive utilities}.
\newblock \bibinfo{journal}{Journal of Economic Theory} \bibinfo{volume}{145},
  \bibinfo{pages}{1776--1804}.
\bibitem[{Marinacci and Montrucchio(2017)}]{marinacci2017unique}
\bibinfo{author}{Marinacci, M.}, \bibinfo{author}{Montrucchio, L.},
  \bibinfo{year}{2017}.
\newblock \bibinfo{title}{Unique Tarski fixed points}.
\newblock \bibinfo{type}{Technical Report}.
\bibitem[{Morand and Reffett(2003)}]{morand2003existence}
\bibinfo{author}{Morand, O.F.}, \bibinfo{author}{Reffett, K.L.},
  \bibinfo{year}{2003}.
\newblock \bibinfo{title}{Existence and uniqueness of equilibrium in nonoptimal
  unbounded infinite horizon economies}.
\newblock \bibinfo{journal}{Journal of Monetary Economics}
  \bibinfo{volume}{50}, \bibinfo{pages}{1351--1373}.
\bibitem[{Pavoni et~al.(2018)Pavoni, Sleet and Messner}]{pavoni2018dual}
\bibinfo{author}{Pavoni, N.}, \bibinfo{author}{Sleet, C.},
  \bibinfo{author}{Messner, M.}, \bibinfo{year}{2018}.
\newblock \bibinfo{title}{The dual approach to recursive optimization: theory
  and examples}.
\newblock \bibinfo{journal}{Econometrica} \bibinfo{volume}{86},
  \bibinfo{pages}{133--172}.
\bibitem[{Ren and Stachurski(2018)}]{ren2018dynamic}
\bibinfo{author}{Ren, G.}, \bibinfo{author}{Stachurski, J.},
  \bibinfo{year}{2018}.
\newblock \bibinfo{title}{Dynamic programming with recursive preferences:
  Optimality and applications}.
\newblock \bibinfo{journal}{arXiv preprint arXiv:1812.05748} .
\bibitem[{Rinc{\'o}n-Zapatero and
  Rodr{\'i}guez-Palmero(2003)}]{rincon2003existence}
\bibinfo{author}{Rinc{\'o}n-Zapatero, J.P.},
  \bibinfo{author}{Rodr{\'i}guez-Palmero, C.}, \bibinfo{year}{2003}.
\newblock \bibinfo{title}{Existence and uniqueness of solutions to the bellman
  equation in the unbounded case}.
\newblock \bibinfo{journal}{Econometrica} \bibinfo{volume}{71},
  \bibinfo{pages}{1519--1555}.
\bibitem[{Thompson(1963)}]{thompson1963certain}
\bibinfo{author}{Thompson, A.C.}, \bibinfo{year}{1963}.
\newblock \bibinfo{title}{On certain contraction mappings in a partially
  ordered vector space}.
\newblock \bibinfo{journal}{Proceedings of the American Mathematical Society}
  \bibinfo{volume}{14}, \bibinfo{pages}{438--443}.
\bibitem[{Williamson and Janos(1987)}]{williamson1987constructing}
\bibinfo{author}{Williamson, R.}, \bibinfo{author}{Janos, L.},
  \bibinfo{year}{1987}.
\newblock \bibinfo{title}{Constructing metrics with the heine-borel property}.
\newblock \bibinfo{journal}{Proceedings of the American Mathematical Society}
  \bibinfo{volume}{100}, \bibinfo{pages}{567--573}.
\bibitem[{Zhang(2013)}]{zhang2012variational}
\bibinfo{author}{Zhang, Z.}, \bibinfo{year}{2013}.
\newblock \bibinfo{title}{Variational, Topological, and Partial Order Methods
  with Their Applications}. volume~\bibinfo{volume}{29} of
  \textit{\bibinfo{series}{Developments in Mathematics}}.
\newblock \bibinfo{publisher}{Springer Berlin Heidelberg}.
\newblock \DOIprefix\doi{10.1007/978-3-642-30709-6}.

\end{thebibliography}
\end{document}